\documentclass[runningheads,a4paper]{llncs}
\pdfoutput=1 %
\usepackage{orcidlink}\renewcommand{\orcidID}[1]{\orcidlink{#1}}
\usepackage[utf8]{inputenc}
\usepackage[T1]{fontenc}
\usepackage{llncsstuffthm}
\usepackage{cwdefs}
\usepackage{pgfkeys}
\usepackage{plabels}
\usepackage{amssymb}
\usepackage{amsfonts}
\usepackage{multirow}
\usepackage{longtable}
\usepackage{booktabs}
\usepackage{amsmath}
\usepackage{bigdelim}
\usepackage{comment}
\usepackage{xspace}
\usepackage{booktabs}
\usepackage{graphicx}
\usepackage{setspace}
\usepackage{upgreek}
\usepackage{enumitem}
\usepackage{rotating}
\usepackage{array}
\usepackage[titles]{tocloft}
\usepackage{xcolor}
\usepackage{colortbl}
\usepackage{capt-of}
\usepackage{pdfpages}
\usepackage{wrapfig}
\usepackage{fancyvrb}
\usepackage{hyperref}
\usepackage[capitalise]{cleveref}
\usepackage[noadjust,nobreak]{cite}
\hypersetup{
hyperfootnotes=false,
breaklinks=true,
hidelinks,
linktoc=all,
colorlinks=false
}

\setcounter{tocdepth}{2}

\crefname{defn}{Definition}{Definitions}
\crefname{proc}{Procedure}{Procedures}
\crefname{thm}{Theorem}{Theorems}
\crefname{examp}{Example}{Examples}
\crefname{subex}{Example}{Examples}
\crefname{coro}{Corollary}{Corrolaries}
\crefname{subthm}{Theorem}{Theorems}

\crefname{section}{Sect.}{Sects.}
\crefname{prop}{Prop.}{Props.}
\crefname{subprop}{Prop.}{Props.}
\Crefname{section}{Section}{Sections}
\Crefname{prop}{Proposition}{Propositions}
\Crefname{subprop}{Proposition}{Propositions}

\crefname{stp}{step}{steps}
\crefname{clm}{claim}{claims}
\crefname{rule}{rule}{rules}

\DeclareRobustCommand{\abbrevcrefs}{%
  \crefname{prop}{Prop.}{Props.}%
  \crefname{subprop}{Prop.}{Props.}%
}
\DeclareRobustCommand{\abbref}[1]{{\abbrevcrefs\cref{#1}}}
\newcommand{\Vampire}{\textsf{Vampire}\xspace}

\newcommand{\EProver}{\textsf{E}\xspace}

\newcommand{\ProverN}{\textsf{Prover9}\xspace}
\newcommand{\leanCoP}{\textsf{leanCoP}\xspace}

\newcommand{\CMProver}{\textsf{CMProver}\xspace}
\newcommand{\SETHEO}{\textsf{SETHEO}\xspace}
\newcommand{\PTTP}{\textsf{PTTP}\xspace}
\newcommand{\PIE}{\textsf{PIE}\xspace}

\definecolor{colorjan}{rgb}{0.8,0.0,0.0}
\definecolor{colorch}{rgb}{0.0,0.0,0.8}

\newcommand{\prenum}{1.0em}

\newcommand{\por}{;}
\newcommand{\pnot}{\mathbf{not}\;}

\newcommand{\rfolzero}{\gamma^0}
\newcommand{\rfolone}{\gamma^1}
\newcommand{\pfol}{\gamma}

\newcommand{\SP}{\mathsf{S}}

\newcommand{\pred}{\mathcal{P}\hspace{-2pt}\mathit{red}}

\newcommand{\predplain}{\mathcal{P}\hspace{-2pt}\mathit{red}^\mathit{LP}}
\newcommand{\predpol}{\mathcal{P}\hspace{-2pt}\mathit{red}^{\pm}}

\newcommand{\fun}{\mathcal{F}\hspace{-2pt}\mathit{un}}

\newcommand{\ren}{\f{rename}_{0\mapsto 1}}

\newcommand{\superscripted}{$0/1$-super\-scripted\xspace}

\newcommand{\algoinput}{\smallskip \noindent\textsc{Input: }}
\newcommand{\algooutput}{\smallskip \noindent\textsc{Output: }}
\newcommand{\algomethod}{\smallskip \noindent\textsc{Method: }}

\newcommand{\VP}{V_{+}}
\newcommand{\VPP}{V_{+1}}
\newcommand{\VN}{V_{-}}

\newcommand{\fn}{\f{n}}
\newcommand{\fz}{\f{z}}
\newcommand{\fu}{\f{u}}

\newcommand{\prooftab}{\small}

\begin{document}

\title{
Synthesizing Strongly Equivalent Logic Programs: Beth Definability for Answer
Set Programs via Craig Interpolation in First-Order Logic%
\thanks{Funded by the Deutsche Forschungsgemeinschaft (DFG, German
  Research Foundation) -- Project-ID~457292495.}
}
\titlerunning{Synthesizing Strongly Equivalent Logic Programs}
\author{Jan Heuer\orcidID{0009-0004-1545-9356} \and
  Christoph Wernhard\orcidID{0000-0002-0438-8829}}
\authorrunning{J. Heuer and C. Wernhard}
\institute{University of Potsdam, Potsdam, Germany\\
  \email{jan.heuer,christoph.wernhard@uni-potsdam.de}}

\maketitle

\begin{abstract}
We show a projective Beth definability theorem for logic programs under the
stable model semantics: For given programs $P$ and $Q$ and vocabulary~$V$ (set
of predicates) the existence of a program~$R$ in $V$ such that $P \cup R$ and
$P \cup Q$ are strongly equivalent can be expressed as a first-order
entailment.
Moreover, our result is effective:
A program~$R$ can be constructed from a Craig interpolant for this entailment,
using a known first-order encoding for testing strong equivalence, which we
apply in reverse to extract programs from formulas.
As a further perspective, this allows transforming logic programs via
transforming their first-order encodings.
In a prototypical implementation, the Craig interpolation is performed by
first-order provers based on clausal tableaux or resolution calculi.
Our work shows how definability and interpolation, which underlie modern
logic-based approaches to advanced tasks in knowledge representation, transfer
to answer set programming.

\end{abstract}

\section{Introduction}

Answer set programming~%
\cite{niemelae:asp:99,marek:truzcynski:1999,handbook:asp:2008,baral:book,lifschitz:book}
is one of the major paradigms in knowledge representation. A problem is
expressed declaratively as a logic program, a set of rules in the form of
implications. An answer set solver returns representations of its
\name{answer sets} or \name{stable models}~%
\cite{gelfond:lifschitz:1988,lifschitz:10:13defs}. That is, minimal Herbrand
models, where models with facts not properly justified in a
non-circular way are excluded. Modern answer set solvers such as \name{clingo}~%
\cite{clingo:2019} are advanced tools that integrate SAT technology.

Two logic programs can be considered as \name{equivalent} if and only if they
have the same answer sets. However, if two equivalent programs are each
combined with some other program, the results are not necessarily equivalent.
Thus, it is of much more practical relevance to consider instead a notion of
equivalence that guarantees the same answer sets even in combination with
other programs: Two logic programs $P, Q$ are \name{strongly
  equivalent}~\cite{LifschitzEtAl2001} if they can be exchanged in the context
of any other program~$R$ without affecting the answer sets of the overall
program. That is, $P$ and $Q$ are strongly equivalent if and only if for all
logic programs $R$ it holds that $P \cup R$ and $Q \cup R$ have the same
answer sets.

Although it has been known that strong equivalence of logic programs
under the stable model semantics can be translated into equivalence of
classical logical formulas, e.g.~\cite{lin:equivalence:2002}, developments in
the languages of answer set programming make this an issue of ongoing research~%
\cite{lpeq:2004,selp:2005,LifschitzEtAl2019,Heuer2020,heuer-wlp-2023,FandinnoLifschitz2023}.
The practical objective is to apply first-order provers to determine the
equivalence of
two
logic programs.

We now consider the situation where only a single program is given and a
strongly equivalent one is to be synthesized automatically. For the new
program the set of allowed predicates, including to some degree the position
within rules in which they are allowed, is restricted to a given vocabulary.
Not just ``absolute'' strong equivalence is of interest, but also strong
equivalence with respect to some background knowledge expressed as a logic
program. Thus, for given programs $P$ and $Q$, and vocabulary~$V$ we want to
find programs~$R$ in $V$ such that $P \cup R$ and $P \cup Q$ are strongly
equivalent.

Our question has two aspects: characterizing the \emph{existence} of a program $R$
for given $P,Q,V$ and, if one exists, the effective \emph{construction} of such
an~$R$. As we will show, existence can be addressed by Beth definability~%
\cite{beth:1953,craig:uses} on the basis of Craig interpolation~%
\cite{craig:linear} for first-order logic. The construction can then be
performed by extracting an interpolant from a proof of the first-order
characterization of existence. We realize this practically with the
first-order provers \CMProver~\cite{cw:pie:2016,cw:pie:2020} and \ProverN~%
\cite{prover9} and an interpolation technique for clausal tableaux~%
\cite{cw:interpolation:2021,cw:range:2023}.

To achieve this, we start from a known representation of logic programs in
classical first-order logic for the purpose of verifying strong equality. We
supplement it with a formal characterization to determine whether an arbitrary
given first-order formula represents a logic program, and, if so, to extract a
represented logic program from the formula. This novel ``reverse translation''
also has other potential applications in program transformation.

Beth definability and Craig interpolation play a key role for advanced tasks
in other fields of knowledge representation, in particular for query
reformulation in databases~%
\cite{segoufin:vianu:determinacy:05,nash:2010,toman:wedell:book,benedikt:book,bwp:pods:2023}
and description logics as well as ontology-mediated querying~%
\cite{tencate:etal:defdl:2005,tencate:etal:expressive:2013,toman:weddell:horn:2022,living:without:2023}.
Our work aims to provide for these lines of research the bridge to answer set
programming.

\textit{Structure of the Paper.} After providing in \cref{sec-background}
background material on strong equivalence as well as on interpolation and
definability we develop in \cref{sec-craig-beth} our technical results. Our
prototypical implementation\footnote{Available from
\url{http://cs.christophwernhard.com/pie/asp} as free software.} is then
described in \cref{sec-implementation}. We conclude in \cref{sec-conclusion}
with discussing related work and perspectives.
\newpage

\section{Background}
\label{sec-background}

\subsection{Notation}
\label{sec-notation}

We map between two formalisms: logic programs and formulas of classical
first-order logic without equality (briefly \name{formulas}). In both formalisms we have
\name{atoms} $p(t_1,\ldots,t_n)$, where $p$ is a \name{predicate} and
$t_1,\ldots,t_n$ are terms built from \name{functions}, including
\name{constants}, and \name{variables}. We assume variable names as case
insensitive to take account of the custom to write them uppercase in logic
programs and lowercase in first-order logic.
Predicates in logic programs are distinct from those in formulas, but
with a correspondence: If $p$ is a predicate for use in logic programs, then
the two predicates $p^0$ and $p^1$, both with the same arity as $p$, are for
use in formulas. Thus, predicates in formulas are always decorated with
a $0$ or $1$ superscript. To emphasize this, we sometimes speak of
\name{\superscripted formulas}.

A \defname{literal} is an atom or a negated atom.
A \defname{clause} is a disjunction of literals, a \defname{clausal formula} is
a conjunction of clauses. The empty disjunction is represented by $\false$, the
empty conjunction by $\true$. On occasion we write a clause also as an
implication.

A subformula occurrence in a formula has \defname{positive (negative) polarity}
if it is in the scope of an even (odd) number of possibly implicit negations.
A formula is \defname{universal} if occurrences of $\forall$ have only positive
polarity and occurrences of $\exists$ have only negative polarity.
Semantic entailment and equivalence of formulas are expressed by $\entails$ and $\equiv$.

Let $F$ be a formula. $\fun(F)$ is the set of functions occurring in it,
including constants, and $\pred(F)$ is the set of predicates occurring in it.
$\predpol(F)$ is the set of pairs $\la \mathit{pol}, p \ra$, where $p$ is a
predicate and $\mathit{pol} \in \{{+},{-}\}$, such that an atom with
predicate~$p$ occurs in $F$ with the polarity indicated by $\mathit{pol}$. We
write $\la +, p\ra$ and $\la -, p\ra$ succinctly as $+p$ and $-p$.
To map from the predicates occurring in a formula to predicates of logic
programs we define $\predplain(F) \eqdef \{p \mid p^i \in \pred(F), i \in
\{0,1\}\}$. For logic programs~$P$, we define $\fun(P)$ and $\pred(P)$
analogously as for formulas.

\subsection{Strong Equivalence as First-Order Equivalence}

We consider \name{disjunctive logic programs with negation in the head}~\cite[Sect. 5]{Lifschitz1996},
which provide a normal form for answer set programs~\cite{CabalarFerraris2007}.
A \defname{logic program} is a set of \defname{rules} of the form%
\[\begin{array}{l}
A_1 \por \ldots \por A_k  \por \pnot A_{k+1} \por \ldots \por \pnot A_l
\revimp
A_{l+1}, \ldots, A_{m}, \pnot A_{m+1}, \ldots, \pnot A_{n}.,
\end{array}
\] where
$A_1, \ldots, A_n$ are atoms, $0 \leq k \leq l \leq m \leq n$. Recall from \cref{sec-notation}
that an atom can have argument terms with functions and variables.
The \defname{positive/negative head} of a rule are the atoms $A_{1}, \ldots, A_{k}$
and $A_{k+1}, \ldots, A_{l}$ respectively.
Analogously the \defname{positive/negative body} of a rule are $A_{l+1}, \ldots, A_{m}$ and
$A_{m+1}, \ldots, A_{n}$ respectively. Answer sets with respect to the stable model
semantics for this class of programs are for example defined in~\cite{GebserEtAl2015}.

Next we review the definition of the translation~$\pfol$ used to express strong equivalence of
two logic programs in classical first-order logic.%
\footnote{With the notation~$\pfol$ we follow~\cite{FandinnoLifschitz2023}. In~\cite{Heuer2020,heuer-wlp-2023} $\sigma^{*}$ is used for the same translation.}
It makes use of
the fact that strong equivalence can be expressed in the intermediate logic of
here-and-there~\cite{LifschitzEtAl2001}, which in turn can be mapped to
classical logic~\cite{PearceEtAl2009}. For details and proofs we refer to~\cite{Heuer2020,heuer-wlp-2023,FandinnoLifschitz2023}.
Similar results appeared in~%
\cite{lin:equivalence:2002,PearceEtAl2009,ferrarisetal:circumscription:2011,PearceValverde2008}.
As stated in \cref{sec-notation} we assume for each program predicate $p$
two dedicated formula predicates $p^0$ and $p^1$ with the same arity.
If $A$ is an atom with predicate~$p$, then $A^0$ is $A$ with $p^0$ instead of
$p$, and $A^1$ is $A$ with $p^1$ instead of $p$.

\begin{defn}
  \label{def-gamma-and-sp}
  For a rule \[R = A_1 \por \ldots \por A_k \por \pnot A_{k+1} \por \ldots
  \por \pnot A_l
  \revimp A_{l+1}, \ldots, A_{m},
  \pnot A_{m+1}, \ldots, \pnot A_{n}.\] with variables~$\xs$
  define the first-order formulas $\rfolzero(R)$ and
  $\rfolone(R)$ as
  \begin{eqnarray*}
  \rfolzero(R) & \eqdef &
  \forall \xs\,
  (\textstyle\bigwedge_{i={l+1}}^{m} A^0_i \land
  \bigwedge_{i=m+1}^{n} \lnot A^1_i \imp
  \bigvee_{i=1}^{k} A^0_i \lor
  \bigvee_{i=k+1}^{l} \lnot A^1_i),\\
  \rfolone(R) & \eqdef &
  \forall \xs\,
  (\textstyle\bigwedge_{i={l+1}}^{m} A^1_i \land
  \bigwedge_{i=m+1}^{n} \lnot A^1_i \imp
  \bigvee_{i=1}^{k} A^1_i \lor
  \bigvee_{i=k+1}^{l} \lnot A^1_i).\\[-10pt]
  \end{eqnarray*}
  For a logic program~$P$ define the first-order formula~$\pfol(P)$ as
  \[\textstyle\pfol(P) \eqdef
  \bigwedge_{R \in P} \rfolzero(R) \land
  \bigwedge_{R \in P} \rfolone(R).\]
  and define the first-order formula~$\SP_P$ as
  \[\textstyle\SP_{P} \eqdef \bigwedge_{p \in \pred(P)} \forall \xs\, (p^{0}(\xs) \imp p^{1}(\xs)),\]
  where variables $\xs$ match the arity of $p$.
\end{defn}

Using the transformation~$\pfol$ and the formula~$\SP_{P}$ we can express strong
equivalence as an equivalence in first-order logic.

\begin{prop}[\cite{Heuer2020}]\label{prop-strong-equiv-fo}
  Under the stable model semantics two logic programs $P$ and $Q$ are strongly
  equivalent iff the following equivalence holds in classical
  first-order logic $\SP_{P \cup Q} \land \pfol(P) \equiv \SP_{P \cup Q} \land \pfol(Q)$.
\end{prop}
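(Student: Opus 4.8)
The plan is to route the proof through the logic of here-and-there (HT), as in \cite{LifschitzEtAl2001,PearceEtAl2009,Heuer2020}. Recall that an HT-interpretation is a pair $(H,T)$ of classical interpretations sharing a domain and an interpretation of all functions, such that the extension of every predicate in $H$ is contained in its extension in $T$; the strong equivalence theorem of \cite{LifschitzEtAl2001}, in its first-order constant-domain form, says that $P$ and $Q$ are strongly equivalent iff they are satisfied by exactly the same HT-interpretations. First I would fix the obvious bijection between HT-interpretations over $\pred(P \cup Q)$ and \superscripted classical interpretations satisfying $\SP_{P \cup Q}$: map $(H,T)$ to the classical interpretation $I$ that keeps the common domain and functions, reads $p^0$ as the extension of $p$ in $H$, and reads $p^1$ as the extension of $p$ in $T$. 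The containment $H \subseteq T$ is then literally $I \models \SP_{P \cup Q}$, i.e.\ $\forall \xs\,(p^0(\xs) \imp p^1(\xs))$ for each $p \in \pred(P \cup Q)$, and conversely every model of $\SP_{P \cup Q}$ arises this way (predicates outside $P \cup Q$, and the choice of domain, are immaterial to every formula occurring in the claim).

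The technical core is a single lemma: for each rule $R$ and each HT-interpretation $(H,T)$ with associated classical interpretation $I$, one has $(H,T) \models R$ iff $I \models \rfolzero(R) \land \rfolone(R)$. I would prove it by unfolding the HT-satisfaction clause for the implication underlying a rule $\alpha \revimp \beta$: $(H,T) \models \beta \imp \alpha$ holds iff (i)~$T \models \beta \imp \alpha$ classically and (ii)~$(H,T) \not\models \beta$ or $(H,T) \models \alpha$. Using the atomic clauses $(H,T) \models A$ iff $A$ holds in $H$, and $(H,T) \models \pnot A$ iff $A$ fails in $T$ (as $\pnot A$ is $A \imp \false$), condition~(ii) becomes exactly the matrix of $\rfolzero(R)$ --- positive body and head atoms evaluated via $p^0$, negated ones via $\lnot p^1$ --- while condition~(i) becomes the matrix of $\rfolone(R)$, where now everything is evaluated classically in $T$, hence positive atoms via $p^1$ and negated ones again via $\lnot p^1$. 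The universal quantifier over the rule's variables matches the $\forall \xs$ in both formulas because quantified HT uses the same domain at both points. Conjoining over all rules yields $(H,T) \models P$ iff $I \models \pfol(P)$, and likewise for $Q$.

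Putting the pieces together gives the chain: $P,Q$ strongly equivalent $\iff$ $P,Q$ have the same HT-models $\iff$ for every HT-interpretation $(H,T)$, $(H,T) \models P$ exactly when $(H,T) \models Q$ $\iff$ for every classical $I$ with $I \models \SP_{P \cup Q}$, $I \models \pfol(P)$ exactly when $I \models \pfol(Q)$ $\iff$ $\SP_{P \cup Q} \land \pfol(P) \equiv \SP_{P \cup Q} \land \pfol(Q)$. I expect the main obstacle to be the polarity bookkeeping in the core lemma --- getting right why negated atoms in both the body and the head are evaluated at the ``there'' component ($p^1$) whereas positive atoms split between ``here'' ($p^0$) in $\rfolzero$ and ``there'' ($p^1$) in $\rfolone$ --- together with making the first-order subtleties (constant-domain quantified HT, persistence of HT-satisfaction to the ``there'' interpretation) precise; both are treated in detail in \cite{Heuer2020,heuer-wlp-2023}.
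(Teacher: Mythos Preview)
The paper does not actually prove this proposition: it is stated with attribution to \cite{Heuer2020} and the surrounding text explicitly says ``For details and proofs we refer to~\cite{Heuer2020,heuer-wlp-2023,FandinnoLifschitz2023}.'' Your sketch is a faithful outline of precisely the argument carried out in those references --- reduce strong equivalence to agreement on HT-models via \cite{LifschitzEtAl2001}, then encode HT-interpretations as classical models of $\SP_{P\cup Q}$ with $p^0$/$p^1$ reading ``here''/``there'' as in \cite{PearceEtAl2009,Heuer2020}, and check rule-by-rule that HT-satisfaction of $R$ corresponds to classical satisfaction of $\rfolzero(R)\land\rfolone(R)$. So there is nothing to compare against in the paper itself, and your approach matches the cited sources.
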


\subsection{Definition Synthesis with Craig Interpolation}

A formula $Q(\xs)$ is \defname{implicitly definable} in terms of a vocabulary
(set of predicate and function symbols)~$V$ within a sentence~$F$ if, whenever
two models of $F$ agree on values of symbols in $V$, then they agree on the
extension of $Q$, i.e., on the tuples of domain members that satisfy $Q$. In
the special case where $Q$ has no free variables, this means that they agree
on the truth value of $Q$. Implicit definability can be expressed as
\begin{equation}\label{eq-defimp}
F \land F' \entails \forall \xs\, (Q(\xs) \equi Q'(\xs)),
\end{equation}
where $F'$ and $Q'$ are copies of $F$ and $Q$ with all symbols not in $V$
replaced by fresh symbols. This semantic notion contrasts with the
syntactic one of \defname{explicit definability}: A formula $Q(\xs)$ is
\name{explicitly definable} in terms of a vocabulary~$V$ within a sentence~$F$
if there exists a formula $R(\xs)$ in the vocabulary~$V$ such that
\begin{equation}\label{eq-defexp}
  F \entails \forall \xs\, (R(\xs) \equi Q(\xs)).
\end{equation}
The ``Beth property''~\cite{beth:1953} states equivalence of both notions and
applies to first-order logic.
``Craig interpolation'' is a tool that can be applied to prove the ``Beth
property''~\cite{craig:uses}, and, moreover to construct formulas~$R$ from
given $F,Q,V$ from a proof of implicit definability. Craig's interpolation
theorem~\cite{craig:linear} applies to first-order logic and states that if a
formula~$F$ entails a formula~$G$ (or, equivalently, that $F \imp G$ is
valid), then there exists a formula~$H$, a \defname{Craig interpolant} of $F$
and~$G$, with the properties that $F \entails H$, $H \entails G$, and the
vocabulary of $H$ (predicate and function symbols as well as free variables)
is in the common vocabulary of $F$ and $G$. Craig's theorem can be
strengthened to the existence of \defname{Craig-Lyndon interpolants}~%
\cite{lyndon} that satisfy the additional property that predicates in $H$
occur only in polarities in which they occur in both $F$ and~$G$.
In our technical framework this condition is expressed as $\predpol(H)
\subseteq \predpol(F) \cap \predpol(G)$.

Craig's interpolation theorem can be proven by constructing $H$ from a proof
of $F \entails G$. This works on the basis of sequent calculi~%
\cite{smullyan:book:68,takeuti:1987} and analytic tableaux~%
\cite{fitting:book}. For calculi from automated first-order reasoning various
approaches have been considered~%
\cite{slagle:interpolation:1970,huang:95,bonacina:15:on,kovacs:17,cw:interpolation:2021,cw:range:2023}.
A method~\cite{cw:interpolation:2021} for clausal tableaux~\cite{letz:habil}
performs Craig-Lyndon interpolation and operates on proofs emitted by a
general first-order prover, without need to modify the prover for interpolation,
and inheriting completeness for full first-order logic from it.
Indirectly that method also works on resolution proofs expanded into
trees~\cite{cw:range:2023}.

Observe that the characterization of implicit definability (\ref{eq-defimp})
can also be expressed as $F \land Q(\xs) \entails F' \imp Q'(\xs)$. An
``explicit'' definiens $R(\xs)$ can now be constructed from given $F$,
$Q(\xs)$ and $V$ just as a Craig interpolant of $F \land Q(\xs)$ and $F' \imp
Q'(\xs)$.
The synthesis of definitions by Craig interpolation was recognized as a
logic-based core technique for view-based query reformulation in relational
databases~%
\cite{segoufin:vianu:determinacy:05,nash:2010,toman:wedell:book,benedikt:book}.
Often strengthened variations of Craig-Lyndon interpolation are used there
that preserve criteria for domain independence, e.g., through relativized
quantifiers~\cite{benedikt:book} or range-restriction \cite{cw:range:2023}.

\section{Variations of Craig Interpolation and Beth Definability for Logic Programs}
\label{sec-craig-beth}

We synthesize logic programs according to a variation of Beth's theorem
justified by a variation of Craig-Lyndon interpolation. Craig-Lyndon
interpolation ``out of the box'' is not sufficient for our purposes: To obtain
logic programs as definientia we need as a basis a stronger form of
interpolation where the interpolant is not just a first-order formula but,
moreover, is the first-order encoding of a logic program, permitting to actually
extract the program.

\subsection{Extracting Logic Programs from a First-Order Encoding}
\label{sec-extracting}

We address the questions of how to abstractly characterize first-order
formulas that encode a logic program, and how to extract the program from a
first-order formula that meets the characterization. As specified in
\cref{sec-notation}, we assume first-order formulas over predicates
superscripted with $0$ and $1$. The notation $\SP_P$ (\cref{def-gamma-and-sp})
is
now overloaded for \superscripted formulas~$F$ as $\SP_F \eqdef \bigwedge_{p
  \in \predplain(F)} \forall \xs\, (p^0(\xs) \imp p^1(\xs))$, where
variables $\xs$ match the arity of $p$.
We introduce a convenient notation for the result of systematically renaming
all $0$-superscripted predicates to their $1$-superscripted correspondence.
\begin{defn}\label{def-ren}
  For \superscripted first-order formulas $F$ define $\ren(F)$ as $F$ with all
  occurrences of $0$-superscripted predicates $p^0$ replaced by the
  corresponding $1$-superscripted predicate~$p^1$.
\end{defn}
Obviously $\ren$ can be moved over logic operators, e.g., $\ren(F \land G)
\equiv \ren(F) \land \ren(G)$, and $\ren(\forall \xs\, F) \equiv \forall \xs\,
\ren(F)$.
Semantically, $\ren$ preserves entailment and thus also equivalence.
\begin{prop}
  For \superscripted first-order formulas $F$ and $G$ it holds that
  \subpropinline{prop-ren-entails} If $F \entails G$, then $\ren(F) \entails
  \ren(G)$.
  \subpropinline{prop-ren-equiv} If $F \equiv G$, then $\ren(F) \equiv \ren(G)$.
\end{prop}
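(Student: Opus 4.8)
The plan is to prove part (prop-ren-entails) directly by a model-theoretic argument relating models of $F$ to models of $\ren(F)$, and then to obtain part (prop-ren-equiv) as an immediate consequence by applying (prop-ren-entails) in both directions.

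First I would set up the correspondence between interpretations. Given any structure $\mathcal{M}$ over the $\superscripted$ vocabulary, define a companion structure $\mathcal{M}'$ with the same domain and the same interpretation of all functions and all $1$-superscripted predicates, but with $(p^0)^{\mathcal{M}'} \eqdef (p^1)^{\mathcal{M}}$ for every relevant predicate $p$. Conversely, for a structure $\mathcal{N}$ in which $p^0$ and $p^1$ have, by construction, no independent role on the formulas of interest, one reads off a structure satisfying $\ren(\cdot)$. The key lemma is the substitution-style equivalence: for every $\superscripted$ formula $H$ and every variable assignment, $\mathcal{M} \models \ren(H)$ iff $\mathcal{M}' \models H$. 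This follows by a routine induction on the structure of $H$: the atomic case holds because replacing $p^0$ by $p^1$ in $H$ is mirrored by the choice $(p^0)^{\mathcal{M}'} = (p^1)^{\mathcal{M}}$, and all connective and quantifier cases are immediate since $\ren$ commutes with $\land,\lor,\lnot,\imp,\forall,\exists$ (as already noted in the text preceding the proposition).

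With the lemma in hand, part (prop-ren-entails) is short. Suppose $F \entails G$ and let $\mathcal{M} \models \ren(F)$; I must show $\mathcal{M} \models \ren(G)$. By the lemma, $\mathcal{M}' \models F$, hence $\mathcal{M}' \models G$ by the assumed entailment, and applying the lemma again gives $\mathcal{M} \models \ren(G)$. Since $\mathcal{M}$ was arbitrary, $\ren(F) \entails \ren(G)$. Part (prop-ren-equiv) then follows: if $F \equiv G$, then $F \entails G$ and $G \entails F$, so by (prop-ren-entails) both $\ren(F) \entails \ren(G)$ and $\ren(G) \entails \ren(F)$, i.e., $\ren(F) \equiv \ren(G)$.

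The only point needing mild care — and the closest thing to an obstacle — is making the companion-structure construction precise when $F$ and $G$ do not contain exactly the same $0$-superscripted predicates; one should fix a common superset of $\predplain(F) \cup \predplain(G)$ (or simply all predicates) and define $\mathcal{M}'$ uniformly on it, so that the lemma applies simultaneously to $F$ and $G$. Everything else is the standard induction, which I would present tersely or even leave to the reader, since $\ren$ is just a syntactic renaming that is semantically absorbed into a reinterpretation of the renamed symbols.
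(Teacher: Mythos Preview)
Your argument is correct. The paper, however, does not supply a proof for this proposition at all: it is stated and then immediately used, with the justification implicit in the remark just before it that $\ren$ commutes with the logical connectives and quantifiers. So there is nothing to compare on the paper's side beyond that one-line observation.

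Your model-theoretic route via the companion structure $\mathcal{M}'$ is the standard way to make such a renaming-preserves-entailment claim precise, and the substitution lemma you state (by induction on $H$) is exactly right. The caveat you raise about fixing a common predicate set is harmless: since $\mathcal{M}'$ only alters the interpretation of $0$-superscripted predicates and leaves everything else untouched, defining it uniformly on all predicates works for both $F$ and $G$ simultaneously, as you note. In a paper at this level of detail your proof would typically be compressed to a single sentence invoking that renaming is a uniform substitution and hence preserves validity/entailment; your expanded version is fine but more than the paper deems necessary.
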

Observe that for all rules $R$ it holds that $\rfolone(R) =
\ren(\rfolzero(R))$ and for all logic programs~$P$ it holds that $\pfol(P)
\entails \ren(\pfol(P))$. On the basis of $\ren$ we define a first-order criterion for
a formula encoding a logic program, that is, a first-order entailment that holds if and
only if a given first-order formula encodes a logic program.
\begin{defn}
  \label{def-enc-lp}
  A \superscripted first-order formula $F$ is said to \defname{encode a logic
    program} iff $F$ is universal and $\SP_{F} \land F \entails \ren(F).$
\end{defn}
This criterion adequately characterizes that the formula represents a logic
program on the basis of the translation $\pfol$. The following theorem
justifies this.
\begin{thm}[Formulas Encoding a Logic Program]
  \subthminline{thm-p-elp} For all logic programs $P$ it holds that $\pfol(P)$
  is a \superscripted first-order formula that encodes a logic program.
  \subthminline{thm-elp-p} If a \superscripted first-order formula $F$ encodes a
  logic program, then there exists a logic program $P$ such that $\SP_F
  \entails \pfol(P) \equi F$, $\pred(P) \subseteq \predplain(F)$ and $\fun(P)
  \subseteq \fun(F)$. Moreover, such a program~$P$ can be effectively
  constructed from $F$.
\end{thm}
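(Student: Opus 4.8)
\emph{First part.}
I would obtain this directly from the observations recorded right before \cref{def-enc-lp}. The formula $\pfol(P)$ is universal: each of its conjuncts $\rfolzero(R)$ and $\rfolone(R)$ is a single $\forall$ at positive polarity followed by a quantifier-free matrix, so $\forall$ occurs only positively and $\exists$ not at all. For the entailment $\SP_{\pfol(P)} \land \pfol(P) \entails \ren(\pfol(P))$, note that $\ren(\rfolzero(R)) = \rfolone(R)$ and $\ren(\rfolone(R)) = \rfolone(R)$, hence $\ren(\pfol(P)) \equiv \bigwedge_{R \in P} \rfolone(R)$, which is a conjunct of $\pfol(P)$. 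Thus already $\pfol(P) \entails \ren(\pfol(P))$, and \emph{a fortiori} so does $\SP_{\pfol(P)} \land \pfol(P)$.

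\emph{Second part.}
The plan is to read a logic program off a clausal normal form of $F$. Since $F$ is universal, prenexing turns every quantifier into $\forall$ --- a positively occurring $\forall$ stays $\forall$, and a negatively occurring $\exists$ becomes $\forall$ after passing the negations above it --- and no Skolemization is needed; so, after universally closing any free variables, $F$ is equivalent to some $\forall \xs\, M$ with $M$ quantifier-free. Bringing $M$ to conjunctive normal form and distributing the prefix over the conjunction, $F \equiv \bigwedge_j \forall \xs\, C_j$ where each $C_j$ is a clause (necessarily over $0/1$-superscripted predicates, by the hypothesis on $F$), and none of these steps introduces a new function or predicate symbol. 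From each $C_j$ I would read off a rule $R_j$ as follows: a $0$-superscripted atom occurring positively in $C_j$ goes into the positive head of $R_j$ and one occurring negatively into the positive body, while a $1$-superscripted atom occurring positively goes into the negative body and one occurring negatively into the negative head; this is always a legal rule, since \cref{def-gamma-and-sp} places no constraint linking the head/body split to the positive/negative split. Rewriting the implication in $\rfolzero$ as a clause shows $\rfolzero(R_j) \equiv \forall \xs\, C_j$, so with $P$ the program consisting of the rules $R_j$ we get $\bigwedge_j \rfolzero(R_j) \equiv F$; applying $\ren$ and using $\rfolone(R_j) = \ren(\rfolzero(R_j))$ together with \cref{prop-ren-equiv} gives $\bigwedge_j \rfolone(R_j) \equiv \ren(F)$, hence $\pfol(P) \equiv F \land \ren(F)$. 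Since $F$ encodes a logic program we have $\SP_F \land F \entails \ren(F)$, which together with the trivial entailment $F \land \ren(F) \entails F$ yields $\SP_F \entails \pfol(P) \equi F$. The vocabulary conditions are immediate: every predicate of $P$ comes from some $p^0$ or $p^1$ occurring in $F$, so $\pred(P) \subseteq \predplain(F)$; every function of $P$ occurs in a term of some $C_j$, hence in $F$, so $\fun(P) \subseteq \fun(F)$. Prenexing, conjunctive normal form and reading off the rules are effective procedures, which settles the ``moreover''.

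\emph{Main obstacle.}
I expect the normalization bookkeeping to be the delicate part: showing that a universal formula really is equivalent to a $\forall$-prenex formula \emph{without introducing new symbols} --- this is precisely where the hypothesis that $F$ is universal is used, for an essential existential would force Skolem functions and thereby break both $\fun(P) \subseteq \fun(F)$ and the equivalence --- and checking that the clause-to-rule reading is total and inverts $\rfolzero$ up to logical equivalence. The remaining manipulations, namely pushing $\ren$ through the connectives via \cref{prop-ren-entails,prop-ren-equiv} and combining the result with the hypothesis $\SP_F \land F \entails \ren(F)$, are routine.
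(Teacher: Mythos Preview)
Your proposal is correct and follows essentially the same approach as the paper. The paper packages the second part as \cref{proc-elp-p} with correctness proof \cref{prop-correct-proc-elp-p}: it, too, converts $F$ to a universal CNF and reads a rule off each clause via exactly your literal-to-position mapping; the only addition is an optional optimization (partitioning the all-$1$-superscripted clauses into $M_1'$ and a discardable $M_1''$ entailed by $\ren(M_0)$) aimed at shorter output programs, whereas your construction corresponds to the baseline choice $M_1' = M_1$, $M_1'' = \true$.
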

\begin{proof}
  (\ref{thm-p-elp}) Immediate from the definition of $\pfol$.
  (\ref{thm-elp-p}) \cref{proc-elp-p} specified below and proven correct in
  \cref{prop-correct-proc-elp-p} shows the construction of a
  suitable program. \qed
\end{proof}
\Cref{thm-elp-p} claims that the vocabulary of the program is only
\emph{included} in the respective vocabulary of the formula. This gives for the
formula the freedom of a larger vocabulary with symbols that may be
eliminated, e.g., by simplifications.

\begin{proc}[Decoding an Encoding of a Logic Program]
  \label{proc-elp-p}
  \ \\
  \algoinput  A \superscripted first-order formula~$F$ that encodes a logic
  program.

  \algomethod
  \begin{enumerate}
  \item \label[stp]{item-proc-elp-p-1} Bring $F$ into conjunctive normal form
    matching
    $\forall \xs\, (M_0 \land M_1)$, where $M_0$ and $M_1$ are clausal formulas such
    that in all clauses of $M_0$ there is a literal whose predicate has
    superscript $0$ and in all clauses of $M_1$ the predicates of all literals
    have superscript $1$.
  \item \label[stp]{item-proc-elp-p-2} Partition $M_1$ into two clausal formulas $M_1'$
    and $M_1''$ such that
    \[\forall \xs\, \ren(M_0) \entails \forall \xs\, M_1''.\] A possibility is
    to take $M_1' = M_1$ and $M_1'' = \true$. Another possibility that often
    leads to a smaller $M'$ is to consider each clause $C$ in $M_1$ and place it into
    $M_1''$ or $M_1'$ depending on whether there is a clause $D$ in $M_0$ such that
    $\ren(D)$ subsumes $C$.
  \item Let $P$ be the set of rules
    \[A_1 \por \ldots \por A_k \por \pnot A_{k+1} \por \ldots
    \por \pnot A_l
    \revimp A_{l+1}, \ldots, A_{m},
    \pnot A_{m+1}, \ldots, \pnot A_{n}\]
    for each clause $\bigwedge_{i={l+1}}^{m}  A^0_i \land
    \bigwedge_{i=m+1}^{n}\! \lnot A^1_i \imp
    \bigvee_{i=0}^{k} A^0_i \lor
    \bigvee_{i=k+1}^{l}\! \lnot A^1_i$
    in $M_0 \land M_1'$.
  \end{enumerate}

  \algooutput Return $P$, a logic program such that $\SP_P \entails
  \pfol(P) \equi F$, $\pred(P) \subseteq \predplain(F)$ and
  $\fun(P) \subseteq \fun(F)$.
\end{proc}

\begin{prop}
  \label{prop-correct-proc-elp-p}
  \cref{proc-elp-p} is correct.
\end{prop}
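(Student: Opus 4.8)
The plan is to check, for the program $P$ returned by \cref{proc-elp-p}, the three assertions of its output: the equivalence $\SP_P \entails \pfol(P) \equi F$ and the inclusions $\pred(P) \subseteq \predplain(F)$ and $\fun(P) \subseteq \fun(F)$. I would work throughout with the clausal decomposition $F \equiv \forall \xs\, (M_0 \land M_1' \land M_1'')$ fixed by the first two steps, writing $N \eqdef M_0 \land M_1'$ for the clausal formula whose clauses become the rules of $P$.

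I would first dispose of the routine parts. \Cref{item-proc-elp-p-1} can always be carried out: a universal formula is equivalent to a prenex formula with only universal quantifiers, whose matrix in CNF is a clausal $M$; since every predicate in a \superscripted formula carries a superscript, each clause of $M$ either contains a literal with a $0$-superscripted predicate or consists solely of $1$-superscripted literals, which sorts the clauses into $M_0$ and $M_1$. This conversion introduces neither predicates nor functions, so $\predplain(M_0 \land M_1) = \predplain(F)$ and $\fun(M_0 \land M_1) = \fun(F)$. For \cref{item-proc-elp-p-2}, the trivial choice $M_1'' = \true$ obviously meets the requirement $\forall \xs\, \ren(M_0) \entails \forall \xs\, M_1''$, and the subsumption-based choice does too, since if $\ren(D)$ subsumes a clause $C$ placed into $M_1''$ then $\forall \xs\, \ren(D) \entails \forall \xs\, C$. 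In the third step, the literals of any clause over $0/1$-superscripted predicates split by sign and superscript into four groups, which match exactly the positive head, negative head, positive body, and negative body of a rule; so each clause $C$ of $N$ reads as a rule $R_C$ with $\rfolzero(R_C)$ the universal closure of $C$, built only from program predicates and functions occurring in $C$. This gives $\pred(P) \subseteq \predplain(N) \subseteq \predplain(F)$ and $\fun(P) \subseteq \fun(N) \subseteq \fun(F)$.

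Next comes the equivalence. The identity to establish is $\pfol(P) \equiv \forall \xs\, N \land \forall \xs\, \ren(N)$: by the clause-to-rule reading, $\bigwedge_{R \in P} \rfolzero(R) \equiv \forall \xs\, N$, and since $\rfolone(R) = \ren(\rfolzero(R))$ and $\ren$ commutes with $\land$ and $\forall$, also $\bigwedge_{R \in P} \rfolone(R) \equiv \ren(\forall \xs\, N) \equiv \forall \xs\, \ren(N)$. One direction then needs no $\SP_P$: from this identity $\pfol(P)$ entails $\forall \xs\, M_0$, $\forall \xs\, M_1'$ and $\forall \xs\, \ren(M_0)$, and by \cref{item-proc-elp-p-2} the last entails $\forall \xs\, M_1''$, so $\pfol(P) \entails \forall \xs\, (M_0 \land M_1' \land M_1'') \equiv F$. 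For the converse, $F$ gives $\forall \xs\, M_0$ and $\forall \xs\, M_1'$, hence also $\forall \xs\, \ren(M_1')$ because $M_1'$ consists of $1$-superscripted clauses and is unchanged by $\ren$; so everything reduces to $\SP_P \land F \entails \forall \xs\, \ren(M_0)$.

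This last entailment is the one genuinely nontrivial point, and I expect it to be the main obstacle. From \cref{def-enc-lp} we only obtain $\SP_F \land F \entails \ren(F)$, hence $\SP_F \land F \entails \forall \xs\, \ren(M_0)$, but with $\SP_F$ in place of the possibly weaker $\SP_P$ — the theorem only asserts $\pred(P) \subseteq \predplain(F)$. I would bridge this semantically. A predicate $p \in \predplain(F) \setminus \pred(P)$ cannot have $p^0$ occurring in $F$: otherwise $p^0$ would occur in $M_0$ (as $M_1$ is purely $1$-superscripted) and $p$ would be a predicate of a rule of $P$. Hence, given any model of $\SP_P \land F$, reinterpreting every such $p^0$ as the empty relation still satisfies $\SP_P \land F$ and now also $\SP_F$, the additional conjuncts $\forall \xs\, (p^0(\xs) \imp p^1(\xs))$ holding vacuously; and the interpretation of every predicate occurring in $\ren(M_0)$ — all of them $1$-superscripted versions of predicates of $P$ — is untouched. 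So the modified model, and with it the original, satisfies $\forall \xs\, \ren(M_0)$. This yields $\SP_P \land F \entails \forall \xs\, \ren(M_0)$, hence $\SP_P \land F \entails \pfol(P)$, and the proof is complete; everything else is bookkeeping about CNF, the clause-to-rule reading, and the behaviour of $\ren$.
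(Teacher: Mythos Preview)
Your proof is correct and its core follows the paper's argument closely: both identify $\pfol(P) \equiv \forall \xs\,(M_0 \land \ren(M_0) \land M_1')$, obtain $\pfol(P) \entails F$ from \cref{item-proc-elp-p-2}, and obtain the converse by pulling $\ren(M_0)$ out of $\ren(F)$ via the assumption that $F$ encodes a logic program.

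Where you diverge is in the target of the semantic claim. The paper's proof establishes only $\SP_F \entails \pfol(P) \equi F$, which is exactly what \cref{thm-elp-p} requires; the appearance of $\SP_P$ in the procedure's output specification is a minor inconsistency in the paper that its own proof does not address. You take the procedure's output literally and prove the stronger statement $\SP_P \entails \pfol(P) \equi F$, and your model-theoretic bridge---reinterpreting each $p^0$ with $p \in \predplain(F) \setminus \pred(P)$ as empty to upgrade a model of $\SP_P \land F$ to one of $\SP_F \land F$ without disturbing the purely $1$-superscripted $\ren(M_0)$---is sound. One small wording point: your sentence ``cannot have $p^0$ occurring in $F$'' is strictly about the CNF $M_0 \land M_1$ rather than $F$ itself (a CNF conversion with simplification could drop a predicate), but since you work through the equivalence $F \equiv \forall \xs\,(M_0 \land M_1' \land M_1'')$ the argument is unaffected. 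In short, the paper gets what it needs with a purely syntactic derivation under $\SP_F$; your extra semantic step buys the sharper $\SP_P$ conclusion and closes a gap the paper leaves open.
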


\begin{proof}
\prlReset{proc-elp-p}

For an input formula $F$ and output program $P$ the syntactic requirements
$\pred(P) \subseteq \predplain(F)$ and $\fun(P) \subseteq \fun(F)$ are easy to
see from the construction of $P$ by the procedure. To prove the
semantic requirement $\SP_F \entails \pfol(P) \equi F$ we first note the
following assumptions, which follow straightforwardly from the specification
of the procedure.
\begin{enumerate}[left=\prenum]\prooftab
  \item[\prl{a1}] $\SP_{F} \land F \entails \ren(F)$.
  \item[\prl{a2}] $F \equiv \forall \xs\, (M_0 \land M_1' \land M_1'')$.
  \item[\prl{a3}] $\ren(\forall \xs\, M_0) \entails \forall \xs\, M_1''$.
  \item[\prl{a4}] $\pfol(P) \equiv \forall \xs\, (M_0 \land \ren(M_0) \land M_1')$.
\end{enumerate}
The semantic requirement $\SP_F \entails \pfol(P) \equi F$ can then be derived
as follows.
\begin{enumerate}[left=\prenum]\prooftab
  \item[\prl{s3}] $\SP_{F} \land F \entails \ren(\forall \xs\, M_0)$.
    \hfill By \pref{a2} and \pref{a1}.
  \item[\prl{s4}] $\SP_{F} \land F \entails
    \forall \xs\, (M_0 \land \ren(M_0) \land M_1')$.
    \hfill By \pref{s3} and \pref{a2}.
  \item[\prl{s5}] $\SP_{F} \land F \entails \pfol(P)$.
    \hfill By \pref{s4} and \pref{a4}.
  \item[\prl{s1}] $\pfol(P) \entails \forall \xs\, (M_0 \land M_1'' \land M_1')$.
    \hfill By \pref{a4} and \pref{a3}.
  \item[\prl{s2}] $\pfol(P) \entails F$.
    \hfill By \pref{s1} and \pref{a2}.
  \item[\prl{s6}] $\SP_F \entails \pfol(P) \equi F$.
    \hfill By \pref{s2} and \pref{s5}.
\end{enumerate}
\qed
\end{proof}
Some examples illustrate \cref{proc-elp-p} and the \name{encoding a
  logic program} property.\hspace{-1em}
\begin{examp}
  \ \\
  \subexamp{ex-enc-0} Consider the following clauses and programs.

  \noindent
  $\begin{array}[t]{ll}
    C_1 = & \lnot \fp^0 \lor \fq^1 \lor \fr^0\\
    C_2 = & \lnot \fs^1 \lor \ft^1 \lor \fu^1\\
    C_3 = & \lnot \fp^1 \lor \fq^1 \lor \fr^1\\
  \end{array}$
  \hfill
  $\begin{array}[t]{ll}
    P = & \fr \revimp \fp, \pnot \fq.\\
        & \pnot \fs \revimp \pnot \ft, \pnot \fu.\\
        & \pnot \fp \revimp \pnot \fq, \pnot \fr.\\
  \end{array}$
  \hfill
  $\begin{array}[t]{ll}
    P' = & \fr \revimp \fp, \pnot \fq.\\
         & \pnot \fs \revimp \pnot \ft, \pnot \fu.\\
  \end{array}$

  \smallskip

  \noindent
  Assume as input of \cref{proc-elp-p} the formula $F = C_1 \land C_2
  \land C_3$. Then $M_0 = C_1$ and $M_1 = C_2 \land C_3$. If in
  \cref{item-proc-elp-p-2} we set $M_1' = M_1$ and $M_1'' = \true$, then
  the extracted program is $P$. We might, however, also set
  $M_1' = C_2$ and $M_1'' = C_3$ and obtain the shorter strongly equivalent
  program $P'$.

  \subexamp{ex-enc-5} For $F = \lnot \fp^1 \lor \fq^1 \lor \fr^0$ we have to
  set $M_1' = M_1'' = M_1 = \true$ and the procedure yields $\{\fr \por \pnot
  \fp \revimp \pnot \fq.\}$.

  \subexamp{ex-enc-3} The formula $F = \lnot \fp^0 \lor \fq^1 \lor \fr^0$ does
  not encode a logic program because $\SP_{F} \land F \not\entails \ren(F)$.
\end{examp}

\begin{remark}\label{remark-simp}
For the extracted program it is desirable that it is not unnecessarily large.
Specifically it should not contain rules that are easily identified as
redundant. \Cref{item-proc-elp-p-2} of \cref{proc-elp-p} permits
techniques to keep $M'$ small. Other possibilities include well-known formula
simplification techniques that preserve equivalence such as removal of
tautological or subsumed clauses and may be integrated into clausification,
\cref{item-proc-elp-p-1} of the procedure.
In addition, conversions that just preserve equivalence modulo $\SP_F$ may be
applied, conceptually as a preprocessing step, although in practice possibly
implemented on the clausal form. \cref{proc-elp-p} then receives as
input not $F$ but a universal first-order formula $F'$ whose vocabulary is
included in that of $F$ with the property
\begin{equation}
  \label{eq-simp-sp}
  \SP_F \entails F' \equi F.
\end{equation}
Formula $F'$ then also encodes a program: That $\SP_{F'} \land F' \entails
\ren(F')$ follows from $\SP_{F} \land F \entails \ren(F)$, (\ref{eq-simp-sp})
and \cref{prop-ren-equiv}. \cref{proc-elp-p} guarantees for its output
$\SP_{F'} \entails \pfol(P) \equi F'$, hence by (\ref{eq-simp-sp}) it follows
that $\SP_F \entails \pfol(P) \equi F$.
\end{remark}

\begin{examp}
  \label{examp-simp}
  Consider the following clauses and programs.

  \noindent
  $\begin{array}[t]{ll}
    C_1 = & \lnot \fp^0 \lor \fq^1 \lor \fr^0\\
    C_2 = & \lnot \fp^0 \lor \fq^1 \lor \fr^1\\
    C_3 = & \lnot \fp^1 \lor \fq^1 \lor \fr^1\\
  \end{array}$
  \hfill
  $\begin{array}[t]{ll}
    P = & \fr \revimp \fp, \pnot \fq.\\
    & \revimp \fp, \pnot \fq, \pnot \fr.
  \end{array}$
  \hfill
  $\begin{array}[t]{ll}
    P' = & \fr \revimp \fp, \pnot \fq.\\
  \end{array}$

  \smallskip
  \noindent
  Assume as input of \cref{proc-elp-p} the formula $F = C_1 \land C_2
  \land C_3$. Then $M_0 = C_1 \land C_2$ and $M_1 = C_3$, and, aiming at a
  short program, we can set $M_1' = \true$ and $M_1'' = C_3$. The extracted
  program is then~$P$. By preprocessing $F$ according to
  \cref{remark-simp} we can eliminate $C_2$ from $F$ and obtain the
  shorter strongly equivalent program~$P'$.
\end{examp}

\subsection{A Refinement of Craig Interpolation for Logic Programs}
\label{sec-craig}

With the material from \cref{sec-extracting} on extracting logic programs
from formulas we can now state a theorem on \name{LP-interpolation}, where
\name{LP} stands for \name{logic program}. It is a variation of Craig
interpolation applying to first-order formulas that encode logic programs. The
theorem states not only the existence of an LP-interpolant, but, moreover, also
claims effective construction.
\begin{thm}[LP-Interpolation]
  \label{thm-ipol}
  Let $F$ be a \superscripted first-order formula that encodes a logic program
  and let $G$ be a \superscripted first-order formula such that $\fun(F)
  \subseteq \fun(G)$ and $\SP_{F} \land F \entails \SP_{G} \imp G$. Then there
  exists a \superscripted first-order formula $H$, called the
  \defname{LP-interpolant} of $F$ and~$G$, such that
  \begin{enumerate}
  \item \label[clm]{item-thm-ipol-1} $\SP_{F} \land F \entails H$.
  \item \label[clm]{item-thm-ipol-2} $H \entails \SP_{G} \imp G$.
  \item \label[clm]{item-thm-ipol-3} $\predpol(H) \subseteq S \cup \{+p^1 \mid +p^0 \in S\} \cup \{-p^1 \mid -p^0 \in S\}$, where
        $S = \predpol(\SP_{F} \land F) \cap \predpol(\SP_G \imp G)$.
  \item \label[clm]{item-thm-ipol-4} $\fun(H) \subseteq \fun(F)$.
  \item \label[clm]{item-thm-ipol-5} $H$ encodes a logic program.
  \end{enumerate}
  Moreover, if existence holds, then an LP-interpolant $H$ can be effectively
  constructed, via a universal Craig-Lyndon interpolant of $F \land \SP_{F}$
  and $\SP_{G} \imp G$.
\end{thm}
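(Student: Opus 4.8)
The plan is to obtain $H$ as a suitably chosen Craig–Lyndon interpolant of the two sides of the given entailment and then to show that this interpolant already encodes a logic program, so that claims \ref{item-thm-ipol-1}–\ref{item-thm-ipol-5} all hold simultaneously. Concretely, I would apply the strengthened Craig interpolation theorem (Craig–Lyndon, as recalled in \cref{sec-background}) to the valid implication $F \land \SP_F \imp (\SP_G \imp G)$, but using the \emph{universal} variant: since $F$ is universal by hypothesis and $\SP_F$ is universal (it is a conjunction of formulas $\forall \xs\,(p^0(\xs)\imp p^1(\xs))$), the antecedent $F \land \SP_F$ is universal; dually, $\SP_G \imp G$ has $\SP_G$ in negative polarity, so the prenex form of the consequent is existential-free after moving negations inward in the appropriate way — hence a universal interpolant $H$ exists. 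This gives \cref{item-thm-ipol-1} and \cref{item-thm-ipol-2} directly, the Lyndon condition on predicate polarities gives the inclusion $\predpol(H) \subseteq \predpol(F \land \SP_F) \cap \predpol(\SP_G \imp G)$, and the common-vocabulary condition on functions together with the hypothesis $\fun(F) \subseteq \fun(G)$ gives $\fun(H) \subseteq \fun(F)$, which is \cref{item-thm-ipol-4}.

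The two points that need real work are \cref{item-thm-ipol-3} and \cref{item-thm-ipol-5}. For \cref{item-thm-ipol-3} I would analyse where the $1$-superscripted predicates enter. The raw Lyndon bound only tells us $\predpol(H) \subseteq \predpol(F\land\SP_F)\cap\predpol(\SP_G\imp G)$; we must further argue that whenever $H$ contains $+p^1$ (resp.\ $-p^1$), this is "licensed" by $+p^0$ (resp.\ $-p^0$) appearing in the common vocabulary $S$. The key observation is the interaction with $\SP_F$ and $\SP_G$: in $\SP_F \land F$ the predicate $p^1$ occurs positively (from the head $p^1(\xs)$ of the $\SP_F$-conjunct) whenever $p \in \predplain(F)$, and in $\SP_G \imp G$ the predicate $p^1$ occurs negatively (from $\SP_G$ in negative position) whenever $p \in \predplain(G)$; so a naive Lyndon bound would let $\pm p^1$ into $H$ too freely. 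To get the sharper statement I would instead apply interpolation not to $F \land \SP_F$ and $\SP_G \imp G$ directly, but first replace them by equivalent formulas in which the "spurious" polarities introduced purely by $\SP_F$ and $\SP_G$ are absent, or alternatively post-process $H$ using the fact (from \cref{thm-elp-p} applied after establishing \cref{item-thm-ipol-5}) that $H$ may be taken equivalent modulo $\SP_H$ to $\pfol(P)$ for a program $P$ whose $p^1$-literals come in tandem with $p^0$-literals via the $\rfolzero/\rfolone$ pairing. I expect this polarity-tracking argument to be the main obstacle, and the cleanest route is probably to prove it by a direct inspection of how $\ren$ relates the $0$- and $1$-parts, exploiting that $\rfolone(R) = \ren(\rfolzero(R))$.

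For \cref{item-thm-ipol-5} I must show the chosen $H$ encodes a logic program, i.e.\ that $H$ is universal (already arranged) and $\SP_H \land H \entails \ren(H)$. Here I would use that $H$ is an \emph{interpolant} sitting between $\SP_F \land F$ and $\SP_G \imp G$, together with the fact that $F$ encodes a logic program, so $\SP_F \land F \entails \ren(F)$, and apply $\ren$-monotonicity (\cref{prop-ren-entails}): from $\SP_F \land F \entails H$ we get $\ren(\SP_F \land F) \entails \ren(H)$, and since $\ren(\SP_F)$ is valid and $\ren(F)$ is entailed by $\SP_F\land F$, chaining these yields $\SP_F \land F \entails \ren(H)$. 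It remains to descend this entailment from $\SP_F \land F$ to $\SP_H \land H$; this is where the polarity/vocabulary control of \cref{item-thm-ipol-3} pays off, because it ensures $\SP_H$ is (equivalent to) the restriction of $\SP_F$ to $\predplain(H)$ and that the "extra" content of $\SP_F \land F$ beyond $H$ does not affect $\ren(H)$ — so one argues $\SP_H \land H \entails \ren(H)$ by a model-theoretic argument extending any model of $\SP_H \land H$ to one of $\SP_F \land F$ on the symbols outside $H$'s vocabulary. Finally, "effective construction" follows because the universal Craig–Lyndon interpolant is itself produced constructively from a proof of the entailment by the clausal-tableau interpolation method cited in \cref{sec-background}, and the subsequent normalisation to a logic program is exactly \cref{proc-elp-p}, which is effective by \cref{prop-correct-proc-elp-p}. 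I would close the proof by assembling these pieces and invoking \cref{thm-elp-p} to read off the actual program from $H$.
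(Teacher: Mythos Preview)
Your plan has a genuine gap at \cref{item-thm-ipol-5}. Taking $H$ to be the raw Craig--Lyndon interpolant does \emph{not} guarantee that $H$ encodes a logic program, and your proposed ``descent'' from $\SP_F \land F$ to $\SP_H \land H$ by extending models fails: Craig interpolation only gives $\SP_F \land F \entails H$, not that every model of $H$ (on the shared vocabulary) extends to a model of $\SP_F \land F$; that stronger property is uniform interpolation/forgetting, which you do not have. A concrete counterexample: let $F = \lnot p^0 \land \lnot p^1$ (which encodes a program) and $G = \lnot p^0$; then $\SP_F \land F \entails \SP_G \imp G$ holds and $H' = \lnot p^0$ is a legitimate universal Craig--Lyndon interpolant, but $\SP_{H'} \land H' \not\entails \ren(H')$ since the model with $p^0$ false and $p^1$ true satisfies the left side and falsifies $\lnot p^1$.

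The paper's fix is a single missing idea: set $H \eqdef H' \land \ren(H')$, where $H'$ is the universal Craig--Lyndon interpolant. Then $\ren(H) \equiv \ren(H')$ (renaming is idempotent), and since $\ren(H')$ is already a conjunct of $H$, the condition $\SP_H \land H \entails \ren(H)$ is immediate, giving \cref{item-thm-ipol-5} for free. \Cref{item-thm-ipol-2,item-thm-ipol-4} remain easy because $H \entails H'$ and $\ren$ does not change function symbols. The chain you wrote down for $\SP_F \land F \entails \ren(H')$ is exactly what the paper uses, but for \cref{item-thm-ipol-1} (to show $\SP_F \land F$ entails the new conjunct $\ren(H')$), not for \cref{item-thm-ipol-5}. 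Finally, this construction is precisely why the bound in \cref{item-thm-ipol-3} contains the extra sets $\{+p^1 \mid +p^0 \in S\}$ and $\{-p^1 \mid -p^0 \in S\}$: they account for the predicates introduced by $\ren(H')$. With the raw $H'$ alone you would have $\predpol(H') \subseteq S$ and \cref{item-thm-ipol-3} would be trivial, so your lengthy polarity analysis there is a symptom of having missed the $\ren(H')$ conjunct.
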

\begin{proof}
  \prlReset{thm-ipol} We show the construction of a suitable formula $H$. Let
  $H'$ be a Craig-Lyndon interpolant of $\SP_{F} \land F$ and $\SP_{G} \imp
  G$. Since $F$ and $\SP_F$ are universal first-order formulas and we have the
  precondition $\fun(F) \subseteq \fun(G)$, we may in addition assume that
  $H'$ is a universal first-order formula. (This additional condition is
  guaranteed for example by the interpolation method from
  \cite{cw:interpolation:2021}, which computes $H'$ directly from a clausal
  tableaux proof, or indirectly from a resolution proof~\cite{cw:range:2023}.)
  Define $H \eqdef H' \land \ren(H')$.
  \Cref{item-thm-ipol-2,item-thm-ipol-4,item-thm-ipol-5}
  of the theorem statement are then easy to see.
  \Cref{item-thm-ipol-1} can be shown as follows. We may
  assume the following.
  \begin{enumerate}[left=\prenum]\prooftab
  \item[\prl{a1}] $\SP_{F} \land F \entails H'$.
    \hfill Since $H'$ is a Craig-Lyndon interpolant.
  \item[\prl{a2}] $\SP_{F} \land F \entails \ren(F)$.
    \hfill Since $F$ encodes a logic program.
  \end{enumerate}
  \Cref{item-thm-ipol-1} can then be derived in the following steps.
  \begin{enumerate}[left=\prenum]\prooftab
  \item[\prl{s1}] $\ren(\SP_{F} \land F) \entails \ren(H')$.
    \hfill By \pref{a1} and \abbref{prop-ren-entails}.
  \item[\prl{s2}]$\SP_{F} \land F \entails \ren(\SP_{F} \land F)$.
    \hfill By \pref{a2}, since $\ren(\SP_{F}) \equiv \true$.
  \item[\prl{s3}] $\SP_{F} \land F \entails \ren(H')$.
    \hfill By \pref{s2} and \pref{s1}.
  \item[\prl{s4}] $\SP_{F} \land F \entails H$.
    \hfill By \pref{s3} and \pref{a1}, since $H = H' \land \ren(H')$.
  \end{enumerate}
  \Cref{item-thm-ipol-3} follows because since $H'$ is a Craig-Lyndon
  interpolant it holds that $\predpol(H') \subseteq \predpol(\SP_{F} \land F)
  \cap \predpol(\SP_G \imp G)$. With the predicate occurrences in $\ren(H')$,
  i.e., $1$-superscripted predicates in positions of $0$-superscripted
  predicates in $H'$, we obtain the restriction of $\predpol(H)$ stated as
  \cref{item-thm-ipol-3}.
  \qed
\end{proof}

For an LP-interpolant of formulas $F$ and $G$, where $F$ encodes a logic
program, the semantic properties stated in
\cref{item-thm-ipol-1,item-thm-ipol-2} are those of a Craig or Craig-Lyndon
interpolant of $\SP_{F} \land F$ and $\SP_{G} \imp G$. The allowed
polarity/predicate pairs are those common in $\SP_{F} \land F$ and $\SP_{G}
\imp G$, as in a Craig-Lyndon interpolant, and, in addition, the
$1$-superscripted versions of polarity/predicate pairs that appear only
$0$-superscripted among these common pairs. These additional pairs are those
that might occur in the result of $\ren$ applied to a Craig-Lyndon
interpolant. In contrast to a Craig interpolant, functions are only
constrained by the first given formula~$F$. Permitting only functions
common to $F$ and $G$ can result in an interpolant with existential
quantification, which thus does not encode a program. \Cref{item-thm-ipol-5}
states that the LP-interpolant indeed encodes a logic program as characterized
in \cref{def-enc-lp}. This property is, so-to-speak, passed through from the
given formula $F$ to the LP-interpolant.

\subsection{Effective Projective Definability of Logic Programs}
\label{sec-definability}

We present a variation of the ``Beth property'' that applies to logic programs
with stable model semantics and takes strong equivalence into account. The
underlying technique is our LP-interpolation, \cref{thm-ipol}. It maps into
Craig-Lyndon interpolation for first-order logic, utilizing that strong
equivalence of logic programs can be expressed as first-order equivalence of
encoded programs. This approach allows the effective construction of logic
programs in the role of ``explicit definitions'' via Craig-Lyndon
interpolation on first-order formulas. Our variation of the ``Beth property''
is \name{projective} as it is with respect to a given set of predicates
allowed in the definiens.
While our LP-interpolation theorem was expressed in terms of first-order
formulas that encode logic programs, we now phrase definability entirely in
terms of logic programs.\footnote{LP-interpolation could also be phrased in
terms of logic programs, providing an interpolation result for logic programs
on its own, not just as basis for definability. We plan to address this in
future work.}

\begin{thm}[Effective Projective Definability of Logic Programs]
  \label{thm-def}
  Let $P$ and $Q$ be logic programs and let $V \subseteq \pred(P) \cup
  \pred(Q)$ be a set of predicates. The existence of a logic program $R$ with
  $\pred(R) \subseteq V$, $\fun(R) \subseteq \fun(P) \cup \fun(Q)$ such that
  $P \cup R$ and $P \cup Q$ are strongly equivalent is expressible as
  entailment between two first-order formulas. Moreover, if existence holds,
  such a program~$R$ can be effectively constructed, via a universal
  Craig-Lyndon interpolant of the left and the right side of the entailment.
\end{thm}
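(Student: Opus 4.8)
The plan is to transport the target into first-order logic via the translation $\pfol$ and \cref{prop-strong-equiv-fo}, recast it as a projective definability statement, and discharge it with the LP-interpolation theorem (\cref{thm-ipol}) together with the decoding theorem (\cref{thm-elp-p}). First I would reformulate the goal: since $\pfol$ distributes over set union, i.e.\ $\pfol(P\cup R)\equiv\pfol(P)\land\pfol(R)$, and $\pred(R)\subseteq V\subseteq\pred(P)\cup\pred(Q)$, the programs $P\cup R$ and $P\cup Q$ have predicate set $\pred(P)\cup\pred(Q)$ in their union, so \cref{prop-strong-equiv-fo} turns the required strong equivalence into
\[
\SP_{P\cup Q}\land\pfol(P)\land\pfol(R)\ \equiv\ \SP_{P\cup Q}\land\pfol(P)\land\pfol(Q).
\]
Writing $\Phi\eqdef\SP_{P\cup Q}\land\pfol(P)$ and using \cref{thm-p-elp,thm-elp-p}, the existence of a program $R$ with $\pred(R)\subseteq V$ and $\fun(R)\subseteq\fun(P)\cup\fun(Q)$ satisfying this equivalence is itself equivalent to the existence of a \superscripted formula $H$ that encodes a logic program, with $\predplain(H)\subseteq V$, $\fun(H)\subseteq\fun(P)\cup\fun(Q)$ and $\Phi\land H\equiv\Phi\land\pfol(Q)$: from $R$ take $H\eqdef\pfol(R)$; from $H$ recover $R$ by \cref{thm-elp-p}, where $\SP_{P\cup Q}\entails\SP_H$ (as $\predplain(H)\subseteq V$) yields $\Phi\entails(\pfol(R)\equi H)$ and hence $\Phi\land\pfol(R)\equiv\Phi\land H$.

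Next I would introduce the copies needed for projectivity. Let $'$ be the renaming that sends each superscripted predicate $p^0,p^1$ with $p\in(\pred(P)\cup\pred(Q))\setminus V$ to fresh superscripted predicates $\bar p^0,\bar p^1$, leaving $V$-predicates and all functions fixed, so that $(\SP_{P\cup Q})'\land\pfol(P)'\equiv\Phi'$. The entailment claimed by the theorem has left side $\SP_{P\cup Q}\land\pfol(P)\land\pfol(Q)$ and right side $((\SP_{P\cup Q})'\land\pfol(P)')\imp\pfol(Q)'$, i.e.\ it reads $\Phi\land\pfol(Q)\entails\Phi'\imp\pfol(Q)'$. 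The easy half — existence of such an $H$ implies this entailment — follows by renaming: from $\Phi\land H\equiv\Phi\land\pfol(Q)$ we get $\Phi\entails(\pfol(Q)\equi H)$, and applying $'$, with $H$ unchanged because $\predplain(H)\subseteq V$, gives $\Phi'\entails(\pfol(Q)'\equi H)$; chaining the two yields $\Phi\land\Phi'\entails(\pfol(Q)\equi\pfol(Q)')$ and hence the entailment.

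For the other half I would construct $H$ by applying \cref{thm-ipol} to $F\eqdef\pfol(P\cup Q)$, which encodes a logic program by \cref{thm-p-elp} and has $\SP_F=\SP_{P\cup Q}$, and to $G\eqdef\pfol(P)'\imp\pfol(Q)'$. Routine checks give $\SP_G\equiv(\SP_{P\cup Q})'$ (hence $\SP_G\imp G\equiv\Phi'\imp\pfol(Q)'$), $\fun(F)=\fun(G)=\fun(P)\cup\fun(Q)$, and that the hypothesis $\SP_F\land F\entails\SP_G\imp G$ of \cref{thm-ipol} is precisely the assumed entailment, its two sides being the left and right formulas above. \cref{thm-ipol} then delivers, from a universal Craig-Lyndon interpolant $H'$ of these two formulas as $H=H'\land\ren(H')$, a formula $H$ that encodes a logic program with $\Phi\land\pfol(Q)\entails H$, $H\entails\Phi'\imp\pfol(Q)'$, $\fun(H)\subseteq\fun(P)\cup\fun(Q)$, and the Lyndon-style restriction on $\predpol(H)$; since the predicate/polarity pairs common to the two input formulas are exactly the $V$-superscripted ones, this restriction forces $\predplain(H)\subseteq V$. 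Feeding $H$ to \cref{thm-elp-p} yields the program $R$ with $\pred(R)\subseteq\predplain(H)\subseteq V$ and $\fun(R)\subseteq\fun(H)\subseteq\fun(P)\cup\fun(Q)$. It then remains only to verify $\Phi\land H\equiv\Phi\land\pfol(Q)$: the direction $\Phi\land\pfol(Q)\entails\Phi\land H$ is immediate from $\Phi\land\pfol(Q)\entails H$, and the direction $\Phi\land H\entails\pfol(Q)$ is obtained from $H\entails\Phi'\imp\pfol(Q)'$ by a splicing argument — in a model of $\Phi\land H$, reinterpret each $\bar p^i$ by the interpretation of $p^i$; as $H$ mentions only $V$-predicates and functions this preserves $H$ and makes $\Phi'$ true (because $\Phi$ is), so $\pfol(Q)'$ holds, which by reading the reinterpretation backwards gives $\pfol(Q)$.

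I expect the crux to be, first, recognizing that \cref{thm-ipol} is exactly the instrument needed: it both keeps the interpolant in the shape of a formula encoding a logic program and, through its Lyndon-style polarity condition applied once the copies $'$ are in place, simultaneously secures the projective restriction $\predplain(H)\subseteq V$. Second, the model-theoretic splicing that turns the ``primed'' consequence $H\entails\Phi'\imp\pfol(Q)'$ back into $\Phi\land H\entails\pfol(Q)$ — standard for Beth-style arguments, but the real content of the definability step. The remaining items (distributivity of $\pfol$ over union, the identities $\SP_F=\SP_{P\cup Q}$ and $\SP_G\equiv(\SP_{P\cup Q})'$, and the entailment bookkeeping) are routine.
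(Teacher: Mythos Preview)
Your proposal is correct and follows essentially the same route as the paper: the same characterizing entailment (up to the harmless identification $\SP_{P\cup Q}\equiv\SP_P\land\SP_Q$ and $\pfol(P\cup Q)\equiv\pfol(P)\land\pfol(Q)$), the same invocation of \cref{thm-ipol} with $F=\pfol(P)\land\pfol(Q)$ against the primed right side, and the same extraction of $R$ via \cref{thm-elp-p}. Your model-theoretic splicing to pass from $H\entails\Phi'\imp\pfol(Q)'$ back to $\Phi\land H\entails\pfol(Q)$ is just the semantic reading of the paper's syntactic substitution step, and your intermediate reformulation via the formula~$H$ is a presentational layer the paper leaves implicit.
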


\begin{proof}
  \prlReset{thm-def} The first-order entailment that characterizes the
  existence of a logic program $R$ is $\SP_{P} \land \SP_{Q} \land \pfol(P)
  \land \pfol(Q) \entails \lnot \SP_{P'} \lor \lnot \SP_{Q'} \lor \lnot \pfol(P')
  \lor \pfol(Q')$, where the primed programs $P'$ and $Q'$ are like $P$ and
  $Q$, except that predicates not in~$V$ are replaced by fresh predicates. If
  the entailment holds, we can construct a program $R$ as follows: Let
  $H$ be the LP-interpolant of $\pfol(P) \land \pfol(Q)$ and $\lnot \pfol(P')
  \lor \pfol(Q')$, as specified in \cref{thm-ipol}, and extract the
  program $R$ from $H$ with \cref{proc-elp-p}.
  That $R$ constructed in this way has the properties claimed in the theorem
  statement can be shown as follows. Since $H$ is an LP-interpolant it follows
  that
  \begin{enumerate}[left=\prenum]\prooftab
  \item[\prl{a1}]
    $\SP_{P} \land \SP_{Q} \land \pfol(P) \land \pfol(Q) \entails H$.
  \item[\prl{a2}] $H \entails \lnot \SP_{P'} \lor \lnot \SP_{Q'}
    \lor \lnot \pfol(P') \lor \pfol(Q')$.
  \item[\prl{a3}] $\predplain(H) \subseteq V$.
  \item[\prl{a4}] $\fun(H) \subseteq \fun(P) \cup \fun(Q)$.
  \item[\prl{a5}] $H$ encodes a logic program.
  \end{enumerate}
  From the preconditions of the theorem and since $R$ is
  extracted from $H$ with \cref{proc-elp-p} and thus
  meets the properties stated in \cref{thm-elp-p}
  it follows that
  \begin{enumerate}[left=\prenum]\prooftab
  \item[\prl{b1}] $V \subseteq \pred(P) \cup \pred(Q)$.
  \item[\prl{b2}] $\SP_H \entails \pfol(R) \equi H$.
  \item[\prl{b3}] $\pred(R) \subseteq \predplain(H)$.
  \item[\prl{b4}] $\fun(R) \subseteq \fun(H)$.
  \end{enumerate}
  The claimed properties of the theorem statement can then be derived as steps
  \pref{s1}, \pref{s2} and \pref{s8} as follows.
  \begin{enumerate}[left=\prenum]\prooftab
  \item[\prl{s1}] $\pred(R) \subseteq V$.
    \hfill By \pref{b3} and \pref{a3}.
  \item[\prl{s2}] $\fun(R) \subseteq \fun(P) \cup \fun(Q)$.
    \hfill By \pref{b4} and \pref{a4}.
  \item[\prl{s3}] $\SP_P \land \SP_Q \entails \SP_H$.
     \hfill By \pref{b1} and \pref{a3}.
  \item[\prl{s4}] $\SP_{P} \land \SP_{Q} \land \pfol(P) \land \pfol(Q) \entails
    \pfol(R)$.
    \hfill By \pref{s3}, \pref{b2} and \pref{a1}.
  \item[\prl{s4a}]
    $\SP_{P'} \land \SP_{Q'} \land \pfol(P') \land \lnot \pfol(Q') \entails
    \lnot H$.
    \hfill By \pref{a2}.
  \item[\prl{s5}]
    $\SP_{P} \land \SP_{Q} \land \pfol(P) \land \lnot \pfol(Q) \entails
    \lnot H$.
    \hfill By \pref{s4a}, \pref{a3} and \pref{a4}.
  \item[\prl{s6}]
    $\SP_{P} \land \SP_{Q} \land \pfol(P) \land \lnot \pfol(Q) \entails
    \lnot \pfol(R)$.
    \hfill By \pref{s5}, \pref{s3} and \pref{b2}.
  \item[\prl{s7}] $\SP_{P} \land \SP_{Q} \land \pfol(P) \land \pfol(R)
    \equiv \SP_{P} \land \SP_{Q} \land \pfol(P) \land \pfol(Q)$.
    \hfill By \pref{s6} and \pref{s4}.
  \item[\prl{s8}] $P \cup R$ and $P \cup Q$ are strongly equivalent.
    \hfill By \pref{s7} and \abbref{prop-strong-equiv-fo}.
  \end{enumerate}

  Conversely, we have to show that if there exists a logic program $R$ with
  the properties in the above theorem statement, then the characterizing entailment
  $\SP_{P} \land \SP_{Q} \land \pfol(P) \land \pfol(Q) \entails \lnot \SP_{P'}
  \lor \lnot \SP_{Q'} \lor \lnot \pfol(P') \lor \pfol(Q')$ does hold. We may
  assume
  \begin{enumerate}[left=\prenum]\prooftab
  \item[\prl{ca1}] $\pred(R) \subseteq V \subseteq \pred(P) \cup \pred(Q)$.
  \item[\prl{ca2}] $\fun(R) \subseteq \fun(P) \cup \fun(Q)$.
  \item[\prl{ca3}] $P \cup R$ and $P \cup Q$ are strongly equivalent.
  \end{enumerate}
  The characterizing entailment can then be derived as follows.
  \begin{enumerate}[left=\prenum]\prooftab
  \item[\prl{t0}] $\SP_{P} \land \SP_{Q} \entails \SP_{R}$
    \hfill By \pref{ca1}.
  \item[\prl{t1}] $\SP_{P} \land \SP_{Q} \land \pfol(P) \land \pfol(R)
    \equiv \SP_{P} \land \SP_{Q} \land \pfol(P) \land \pfol(Q)$.
    \hfill By \pref{ca3}, \abbref{prop-strong-equiv-fo} and \pref{t0}.
  \item[\prl{t2}] $\SP_{P} \land \SP_{Q} \land \pfol(P) \land \pfol(Q) \entails
    \pfol(R)$.
    \hfill By \pref{t1}.
  \item[\prl{t3}]
    $\SP_{P} \land \SP_{Q} \land \pfol(P) \land \lnot \pfol(Q) \entails
    \lnot \pfol(R)$.
    \hfill By \pref{t1}.
  \item[\prl{t4}]
    $\SP_{P'} \land \SP_{Q'} \land \pfol(P') \land \lnot \pfol(Q') \entails
    \lnot \pfol(R)$.
    \hfill By \pref{t3}, \pref{ca1} and \pref{ca2}.
  \item[\prl{t5}]
    $\SP_{P} \land \SP_{Q} \land \pfol(P) \land \pfol(Q) \entails
    \lnot \SP_{P'} \lor \lnot \SP_{Q'} \lor \lnot \pfol(P') \lor \pfol(Q') $.
    \hfill By \pref{t4} and \pref{t2}.
  \end{enumerate}
\qed
\end{proof}

We now give some examples from the application point of view.

\begin{examp}
  \label{examp-thm-def}
  The following examples show for given programs $P,Q$ and sets~$V$ of
  predicates a possible value of $R$ according to \cref{thm-def}.

  \smallskip
  \subexamp{examp-thm-def-red} %
  $\begin{array}[t]{ll}
    Q = & \fp \revimp \fq, \fr.\\
    & \fp \por \fq \revimp \fr.\\
    & \fq \revimp \fq, \fs.
  \end{array}$
  \hfill
  $\begin{array}[t]{ll}
    V = & \{\fp,\fr\}\\
  \end{array}$
  \hfill
  $\begin{array}[t]{ll}
    R = & \fp \revimp \fr.\\
  \end{array}$

  \noindent
  In this first example we consider the special case where $P$ is empty
  and thus not shown. Predicates $\fq$ and $\fs$ are redundant in $Q$,
  ``absolutely'' and not just relative to a program~$P$. By
  \cref{thm-def}, this is proven with the characterizing first-order
  entailment and, moreover, a strongly equivalent reformulation of $Q$ without
  $\fq$ and $\fs$ is obtained as $R$.

  \smallskip
  \subexamp{examp-thm-def-one}  %
  $\begin{array}[t]{ll}
    P = & \fp(X) \revimp \fq(X).
  \end{array}$
  \hfill
  $\begin{array}[t]{ll}
    Q = & \fr(X) \revimp \fp(X).\\
    & \fr(X) \revimp \fq(X).
  \end{array}$
  \hfill
  $\begin{array}[t]{ll}
    V = & \{\fp,\fr\}\\
  \end{array}$
  \hfill
  $\begin{array}[t]{ll}
    R = & \fr(X) \revimp \fp(X).\\
  \end{array}$

  \noindent
  Only $\fr$ and $\fp$ are allowed in $R$. Or, equivalently, $\fq$ is
  redundant in $Q$, \emph{relative} to program~$P$. Again, this is proven with
  the characterizing first-order entailment and, moreover, a strongly
  equivalent reformulation of $Q$ without $\fq$ is obtained as $R$.
  It is the clause in $Q$ with $\fq$ that is redundant relative to $P$ and
  hence is eliminated in $R$.

  \smallskip
  \subexamp{examp-thm-def-two}  %
  $\begin{array}[t]{ll}
    P = & \revimp \fp(X), \fq(X).
  \end{array}$
  \hfill
  $\begin{array}[t]{ll}
    Q = & \fr(X) \revimp \fp(X), \pnot \fq(X).\\[1ex]
    R = & \fr(X) \revimp \fp(X).
  \end{array}$
  \hfill
  $\begin{array}[t]{ll}
    V = & \{\fp,\fr\}\\
  \end{array}$

  \noindent
  Only $\fr$ and $\fp$ are allowed in $R$. The negated literal with $\fq$ in
  the body of the rule in $Q$ is redundant relative to $P$ and is eliminated
  in $R$.

  \smallskip
  \subexamp{examp-thm-def-map}  %
  $\begin{array}[t]{ll}
    P = & \fp(X) \revimp \fq(X), \pnot \fr(X).\\
    & \fp(X) \revimp \fs(X).\\
    & \pnot \fr(X) \por \fs(X) \revimp \fp(X).\\
    & \fq(X) \por \fs(X) \revimp \fp(X).
  \end{array}$
  \hfill
  $\begin{array}[t]{ll}
    Q = & \ft(X) \revimp \fp(X).\\[1ex]
    R = & \ft(X) \revimp \fq(X), \pnot \fr(X).\\
        & \ft(X) \revimp \fs(X).
  \end{array}$
  \hspace{-20pt}
  $\begin{array}[t]{ll}
    V = & \{\fq,\fr,\fs,\ft\}\\
  \end{array}$

  \noindent
  The predicate~$\fp$ is not allowed in $R$. The idea is that $\fp$ is a
  predicate that can be used by a client but is not in the actual
  knowledge base. Program~$P$ expresses a schema mapping from the client
  predicate $\fp$ to the knowledge base predicates $\fq,\fr,\fs$. The result
  program~$R$ is a rewriting of the client query~$Q$ in terms of knowledge
  base predicates. Only the first two rules of $P$ actually describe the
  mapping. The other two rules complete them to a full
  definition, similar to Clark's completion~\cite{clark:1978}, but here
  yielding also a program. Such completed predicate specifications seem
  necessary for the envisaged reformulation tasks.

  \smallskip
  \subexamp{examp-thm-def-map-fold}  %
  $\begin{array}[t]{ll}
    P = & \text{As in \cref{examp-thm-def-map}.}
  \end{array}$
  \hspace{10pt}%
  $\begin{array}[t]{ll}
    Q = & \ft(X) \revimp \fq(X), \pnot \fr(X).\\
    & \ft(X) \revimp \fs(X).\\[1ex]
    R = & \ft(X) \revimp \fp(X).\\
  \end{array}$
  \hfill
  $\begin{array}[t]{ll}
    V = & \{\fp,\ft\}\\
  \end{array}$

  \noindent
  In this example $P$ is from \cref{examp-thm-def-map}, while $Q$ and
  $R$ are also from that example but switched. The vocabulary allows only
  $\fp$ and $\ft$. While \cref{examp-thm-def-map} realizes
  an unfolding of $\fp$, this example realizes folding into $\fp$.

  \smallskip
  \subexamp{examp-thm-def-recur}  %
  $\begin{array}[t]{ll}
    P = & \fn(X) \revimp \fz(X).\\
        & \fn(X) \revimp \fn(Y), \fs(Y,X).\\[1ex]
    V = & \{\fz,\fs\}\\
  \end{array}$
  \hfill
  $\begin{array}[t]{lll}
    Q = & \pnot \fn(X_2) \revimp & \fz(X_0), \fs(X_0,X_1), \\ & & \fs(X_1,X_2).\\[1ex]
    R = & \multicolumn{2}{l}{\revimp \fz(X_0), \fs(X_0,X_1), \fs(X_1,X_2).}
  \end{array}$

  \noindent Program $P$ defines natural numbers recursively. Program~$Q$
  has a rule whose body specifies the natural number $2$ and whose head denies
  that $2$ is a natural number. Because $P$ implies that $2$ is a natural
  number, this head is in $R$ rewritten to the empty head, enforced by
  disallowing the predicate for natural numbers in~$R$.

  \smallskip
  \subexamp{examp-thm-def-chain} %
  $\begin{array}[t]{ll}
     P = & \fc(X,Y,Z) \revimp \fr(X,Y), \fr(Y,Z). \\
         & \revimp \fc(X,Y,Z), \pnot \fr(X,Y). \\
         & \revimp \fc(X,Y,Z), \pnot \fr(Y,Z). \\[1ex]
     V = & \{\fr\} \\
  \end{array}$
  \hfill
  $\begin{array}[t]{ll}
     Q = & \fr(X,Y) \por \pnot \fr(X,Y). \\
         & \revimp \fc(X,Y,Z), \pnot \fr(X,Z). \\[1ex]
     R = & \fr(X,Y) \por \pnot \fr(X,Y). \\
         & \revimp \fr(X,Y), \fr(Y,Z), \pnot \fr(X,Z). \\
  \end{array}$

  \noindent Program $Q$ describes a transitive relation $\fr$ using the helper
  predicate $\fc$ to identify chains where transitivity
  needs to be checked. In $R$ the use of $\fc$ is not allowed, program $P$
  gives the definition of $\fc$. Similar to \cref{examp-thm-def-map}
  this realizes an unfolding of $\fc$.
\end{examp}

Definability according to \cref{thm-def} inherits potential
\emph{decidability} from the first-order entailment problem that characterizes
it. If, e.g., in the involved programs only constants occur as function
symbols, the characterizing entailment can be expressed as
validity in the decidable Bernays-Schönfinkel class.

\subsection{Constraining Positions of Predicates within Rules}
\label{sec-position}

The sensitivity of LP-interpolation to polarity inherited from Craig-Lyndon
interpolation and the program encoding with superscripted predicates offers a
more fine-grained control of the vocabulary of definitions than \cref{thm-def}
by considering also positions of predicates in rules. The following corollary
shows this.\hspace*{-1em}
\begin{coro}[Position-Constrained Effective
    Projective Definability of Logic Programs]
  \label{coro-pos}
  Let $P$ and $Q$ be logic programs and let $\VP,\VPP,\VN \subseteq \pred(P)
  \cup \pred(Q)$ be three sets of predicates. Call a logic program $R$ \defname{in
    scope of} $\la \VP,\VPP,\VN \ra$ if predicates $p$ occur in $R$ only as specified in the following table.

  \begin{center}
  \begin{tabular}{l@{\hspace{1em}}l}
  $p$ is allowed in &  only if $p$ is in\\\midrule
  Positive heads & $\VP$\\
  Negative bodies & $\VP \cup \VPP$\\
  Negative heads & $\VN$\\
  Positive bodies & $\VN$\\
  \end{tabular}\par
  \end{center}
  
  \noindent
  The existence of a logic program~$R$ in scope of $\la \VP,\VPP,\VN \ra$ with
  $\fun(R) \subseteq \fun(P) \cup \fun(Q)$ such that $P \cup R$ and $P \cup Q$
  are strongly equivalent is expressible as entailment between two first-order
  formulas. Moreover, if existence holds, such a program~$R$ can be
  effectively constructed, via a universal Craig-Lyndon interpolant of the
  left and the right side of the entailment.
\end{coro}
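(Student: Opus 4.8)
The plan is to reduce this to Theorem~\ref{thm-def} / Theorem~\ref{thm-ipol} by exploiting the fact that the four "positions" in a rule correspond precisely to the four polarity/superscript combinations in the encoding $\pfol$. Recall from Definition~\ref{def-gamma-and-sp} that in $\rfolzero(R)$ (and likewise $\rfolone(R)$) a positive-head atom appears as $A^0_i$ with positive polarity, a negative-head atom as $\lnot A^1_i$ (so $p^1$ negatively), a positive-body atom as $A^0_i$ with negative polarity (so $p^0$ negatively), and a negative-body atom as $\lnot A^1_i$ in the antecedent (so $p^1$ positively). Thus: $p$ in a positive head $\leftrightarrow$ $+p^0$; $p$ in a negative body $\leftrightarrow$ $+p^1$; $p$ in a negative head $\leftrightarrow$ $-p^1$; $p$ in a positive body $\leftrightarrow$ $-p^0$. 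So "$R$ in scope of $\la\VP,\VPP,\VN\ra$" translates exactly into a restriction on $\predpol(\pfol(R))$: namely $+p^0$ only for $p\in\VP$, $+p^1$ only for $p\in\VP\cup\VPP$, $-p^1$ only for $p\in\VN$, and $-p^0$ only for $p\in\VN$. (One has to be mildly careful that $\ren$ applied inside $\pfol$ does not introduce $p^0$ occurrences — it only adds $p^1$ ones — and that the $M_0$/$M_1$ split of Procedure~\ref{proc-elp-p} respects these polarities, which it does since $\ren(D)$ subsuming $C$ preserves polarity.)

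The construction then mirrors the proof of Theorem~\ref{thm-def}, but with primed copies that rename predicates \emph{per superscripted-polarity slot} rather than per predicate. Concretely, introduce for each $p$ fresh predicates used to block the disallowed slots: when forming $P'$ and $Q'$, replace an occurrence corresponding to $+p^0$ by a fresh predicate unless $p\in\VP$, replace $+p^1$-occurrences unless $p\in\VP\cup\VPP$, replace $-p^1$-occurrences unless $p\in\VN$, and replace $-p^0$-occurrences unless $p\in\VN$. Since the encoding is built slot-by-slot this renaming is well-defined on $\pfol(P)$ and $\pfol(Q)$ directly (it need not come from a renaming of the programs themselves). The characterizing entailment is again
\[
\SP_{P}\land\SP_{Q}\land\pfol(P)\land\pfol(Q)\;\entails\;\lnot\SP_{P'}\lor\lnot\SP_{Q'}\lor\lnot\pfol(P')\lor\pfol(Q').
\]
For the construction direction, take the LP-interpolant $H$ of $\pfol(P)\land\pfol(Q)$ and $\lnot\pfol(P')\lor\pfol(Q')$ from Theorem~\ref{thm-ipol} and extract $R$ via Procedure~\ref{proc-elp-p}. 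By clause~\ref{item-thm-ipol-3}, $\predpol(H)$ is contained in $S\cup\{+p^1\mid +p^0\in S\}\cup\{-p^1\mid -p^0\in S\}$ where $S$ is the common polarity set; because each disallowed slot has been renamed away on exactly one side, $S$ contains only the allowed superscripted-polarity pairs, and the extra $+p^1,-p^1$ from $\ren$ are exactly absorbed by allowing $\VP\subseteq\VP\cup\VPP$ in the "negative body" row and by the $-p^1$ already being governed by $\VN$. Translating back through the slot $\leftrightarrow$ position correspondence gives that $R$ is in scope of $\la\VP,\VPP,\VN\ra$; the remaining properties ($\fun(R)\subseteq\fun(P)\cup\fun(Q)$ and strong equivalence of $P\cup R$ and $P\cup Q$) follow verbatim as in the proof of Theorem~\ref{thm-def}, using clauses~\ref{item-thm-ipol-1,item-thm-ipol-2,item-thm-ipol-4,item-thm-ipol-5} and Proposition~\ref{prop-strong-equiv-fo}.

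The converse direction is also as before: given $R$ in scope of $\la\VP,\VPP,\VN\ra$, the slot restrictions guarantee that renaming the disallowed slots in $P',Q'$ leaves $\pfol(R)$ untouched (every superscripted-polarity pair occurring in $\pfol(R)$ is an allowed one, hence not renamed), so steps \pref{t3}--\pref{t5} of the proof of Theorem~\ref{thm-def} go through with the slot-wise primed formulas in place of the predicate-wise ones. The main obstacle I expect is purely bookkeeping: verifying that the slot-wise renaming interacts correctly with (a)~the $\ren$ operator hidden inside $\pfol$ and inside the definition of the LP-interpolant $H=H'\land\ren(H')$, and (b)~the $M_0/M_1/M_1'/M_1''$ decomposition of Procedure~\ref{proc-elp-p}, so that the polarity bound from clause~\ref{item-thm-ipol-3} really does certify the four table rows (in particular the asymmetric "negative bodies: $\VP\cup\VPP$" row, which is the one place the $+p^0\Rightarrow+p^1$ closure in clause~\ref{item-thm-ipol-3} is essential). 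Once that correspondence is nailed down, everything else is a transcription of the proof of Theorem~\ref{thm-def}.
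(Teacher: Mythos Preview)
Your correspondence between rule positions and polarity/superscript pairs is exactly right, and the forward (construction) direction via an LP-interpolant goes through as you describe. The gap is in the converse direction, and it is not mere bookkeeping.

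You argue that since $\pfol(R)$ is untouched by the slot-wise renaming, ``steps \pref{t3}--\pref{t5} go through with the slot-wise primed formulas in place of the predicate-wise ones.'' But the passage from \pref{t3} to \pref{t4} in the proof of \cref{thm-def} uses that priming is a \emph{uniform} substitution of predicate symbols: from $N \entails \lnot\pfol(R)$ one obtains $N\sigma \entails (\lnot\pfol(R))\sigma$ for any substitution~$\sigma$, and then $(\lnot\pfol(R))\sigma = \lnot\pfol(R)$ because $\sigma$ touches only symbols absent from~$R$. Your slot-wise renaming replaces only \emph{some polarities} of a predicate, which is \emph{not} a substitution and does not preserve validity. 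Concretely, for any formula~$G$ in which $p$ occurs with both polarities, $G[+p/p']$ (with fresh $p'$) is in general strictly stronger than $G$ once $p'$ is universally closed: take $G = p \lor \lnot p$, whose slot-renamed version $p' \lor \lnot p$ is, after $\forall p'$, equivalent to $\lnot p$. Since in the right side $\lnot\SP_{P}\lor\lnot\SP_{Q}\lor\lnot\pfol(P)\lor\pfol(Q)$ a superscripted predicate can easily occur in both polarities, your characterizing entailment can fail even when a program~$R$ in scope exists.

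The paper's sketch avoids this by renaming each half-allowed predicate \emph{uniformly} to a fresh~$p'$ and then adding the disjunct $\lnot\mathit{Aux}$, where $\mathit{Aux}$ contains the appropriate implication $p \imp p'$ or $p' \imp p$. This is precisely the standard first-order encoding of forgetting a single polarity, $\exists{+p}\,G \eqdef \exists p'\,(G[p/p'] \land \forall\xs\,(p(\xs)\imp p'(\xs)))$, which the paper spells out right after the proof. With this $\mathit{Aux}$ the converse direction does go through: from $\pfol(R)\entails G$ and the monotonicity of $\pfol(R)$ in the allowed polarity one gets $\pfol(R)\land\mathit{Aux}\entails G[p/p']$, hence $\pfol(R)\entails G[p/p']\lor\lnot\mathit{Aux}$. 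Your entailment is missing exactly this $\mathit{Aux}$ component.
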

\begin{proof}[Sketch]
Like \cref{thm-def} but with applying the renaming of disallowed predicates
not already at the program level but in the first-order encoding with
considering polarity. Let $V^{\pm}$ be the set of polarity/predicate pairs
defined as $V^{\pm} \eqdef \{+p^0 \mid p \in \VP\} \cup \{+p^1 \mid p \in \VP
\cup \VPP\} \cup \{-p^0 \mid p \in \VN\} \cup \{-p^1 \mid p \in \VN\}$. The
corresponding entailment underlying definability and LP-interpolation is then
$\SP_{F} \land \SP_{Q} \land \pfol(P) \land \pfol(Q) \entails \lnot \SP_{P}'
\lor \lnot \SP_{Q}' \lor \lnot \pfol(P)' \lor \pfol(Q)' \lor \lnot
\mathit{Aux}$, where the primed variations of formulas are obtained by
replacing each predicate $p$ that does not appear in $V^{\pm}$ or appears in
$V^{\pm}$ with only a single polarity by a dedicated fresh predicate $p'$.
(Note that negation and priming commute.) With $W \eqdef
\predpol(\lnot \SP_{P} \lor \lnot \SP_{Q} \lor \lnot \pfol(P) \lor \pfol(Q))$
we define $\mathit{Aux}$ as
\[\bigwedge_{+p \in V^{\pm}, -p \notin V^{\pm}, +p \in W} \forall \xs\, (p(\xs) \imp
p'(\xs))\; \land \\
\bigwedge_{-p \in V^{\pm}, +p \notin V^{\pm}, -p \in W} \forall \xs\, (p'(\xs) \imp
p(\xs)),\]
where $\xs$ matches the arity of the respective predicates $p$.
\qed
\end{proof}
In general, for first-order formulas $F,G$ the second-order entailment $F
\entails \forall p\, G$, where $p$ is a predicate, holds if and only if the
first-order entailment $F \entails G'$ holds, where $G'$ is $G$ with~$p$
replaced by a fresh predicate $p'$. This explains the construction of the right side of the entailment
in the proof of \cref{coro-pos} as an encoding of
quantification upon (or ``forgetting about'') a predicate only in a
\emph{single polarity}. With predicate quantification this can be expressed,
e.g., for positive polarity, as $\exists {+p}\, G \eqdef \exists p'\, (G'
\land \forall \xs\, (p(\xs) \imp p'(\xs)))$, where $p'$ is a fresh predicate
and $G'$ is $G$ with $p$ replaced by $p'$.
We illustrate the specification of $\mathit{Aux}$ with an example.

\begin{examp}
  Assume that $+p \in V^{\pm}$ and $-p \not\in V^{\pm}$. Let $G$ stand for
  $\SP_{P} \land \SP_{Q} \land \pfol(P) \land \lnot \pfol(Q)$ and let $G'$
  stand for $G$ with all occurrences of $p$ replaced by $p'$. For now we
  ignore the restrictions of $\mathit{Aux}$ by $W$ as they only have a
  heuristic purpose.
  The following statements, which are all equivalent to each other, illustrate
  the specification of $\mathit{Aux}$ in a step-by-step fashion. We start with
  expressing the required ``forgetting''. Since it appears in a negation, we
  have to forget here the ``allowed''~$+p$. (1)~$F \entails \lnot \exists
  {+p}\, G$. (2)~$F \entails \lnot \exists p'\, (G' \land \forall \xs\,
  (p(\xs) \imp p'(\xs)))$. (3)~$F \entails \forall p'\, (\lnot G' \lor \lnot
  \forall \xs\, (p(\xs) \imp p'(\xs)))$. (4)~$F \entails \lnot G' \lor \lnot
  \forall \xs\, (p(\xs) \imp p'(\xs))$. (5)~$F \entails \lnot G' \lor \lnot
  \mathit{Aux}$.

  Observing the restrictions by membership in $W$ in the definition of
  $\mathit{Aux}$ can result in a smaller formula~$\mathit{Aux}$. Continuing
  the example, this can be illustrated as follows. Assume $+p \in V^{\pm}$ and
  $-p \not\in V^{\pm}$ as before and in addition $+p \notin W$. Since $+p
  \notin W$ it follows that $-p \notin \predpol(G)$. So ``forgetting'' about
  $+p$ in $G$ is then just $\exists p'\, G'$ and the $\mathit{Aux}$ component
  for $p$ is not needed. (Also a further simplification, outlined in the
  remark below, is possible in this case.)
\end{examp}

\begin{remark}
\label{remark-priming}
The view of ``priming'' as predicate quantification justifies a heuristically
useful simplification of interpolation inputs for definability: If a
predicate~$p$ occurs in a formula~$F$ only with positive (negative) polarity,
then $\exists p\, F$ is equivalent to $F$ with all atoms of the form $p(\tts)$
replaced by $\true$ ($\false$). Hence, if a predicate to be primed occurs only
in a single polarity, we can replace all atoms with it by a truth value
constant.
\end{remark}

We now turn to application possibilities of Corollary~\ref{coro-pos}. While it
gives some control over the position of predicates, it does not allow to
discriminate between allowing a predicate in negative heads and positive
bodies. Predicates allowed in positive heads are also allowed in negative
bodies. We give some examples.

\begin{examp}
  \label{examp-coro-pos}
  The following examples show for given programs $P,Q$ and sets $\VP, \VPP,
  \VN$ of predicates a possible value of $R$ according to
  \cref{coro-pos}. In all the examples it is essential that a
  predicate is disallowed in $R$ only in a single polarity. If it would not be
  allowed at all there would be no solution~$R$.

  \smallskip
  \subexamp{examp-coro-pos-pos}
  $\begin{array}[t]{ll}
    P = & \fp \revimp \fq.
  \end{array}$
  \hfill
  $\begin{array}[t]{ll}
    Q = & \fr \revimp \fp.\\
    & \fr \revimp \fq.\\
    & \fq \revimp \fs.
  \end{array}$
  \hfill
  $\begin{array}[t]{ll}
    \VP = & \{\fp,\fq,\fr,\fs\}\\
    \VPP = & \{\}\\
    \VN = & \{\fp,\fr,\fs\}\\
  \end{array}$
  \hfill
  $\begin{array}[t]{ll}
    R = & \fr \revimp \fp.\\
        & \fq \revimp \fs.
  \end{array}$

  \noindent
  Here $\fq$ is allowed in $R$ in positive heads (and negative bodies) but not
  in positive bodies (and negative heads). Parentheses indicate constraints
  that apply but are not relevant for the example.

  \smallskip
  \subexamp{examp-coro-pos-neg}
  $\begin{array}[t]{ll}
    P = & \fp \revimp \fq.
  \end{array}$
  \hfill
  $\begin{array}[t]{ll}
    Q = & \revimp \fq, \pnot \fp.\\
    & \fr \revimp \fq.\\
    & \fs \revimp \fp.
  \end{array}$
  \hfill
  $\begin{array}[t]{ll}
    \VP = & \{\fq,\fr,\fs\}\\
    \VPP = & \{\}\\
    \VN = & \{\fp,\fq,\fr,\fs\}\\
  \end{array}$
  \hfill
  $\begin{array}[t]{ll}
    R = & \fr \revimp \fq.\\
        & \fs \revimp \fp.
  \end{array}$

  \noindent
  Here $\fp$ is allowed in $R$ in positive bodies (and negative heads) but not
  in negative bodies (and positive heads).

  \smallskip
  \subexamp{examp-coro-pos-phnb}
  $\begin{array}[t]{ll}
    P = & \fp \revimp \fq.\\
        & \fr \revimp \fp.
  \end{array}$
  \hfill
  $\begin{array}[t]{ll}
    Q = & \fs \revimp \pnot \fr.\\
    & \fr \revimp \fq.\\
  \end{array}$
  \hfill
  $\begin{array}[t]{ll}
    \VP = & \{\fs\}\\
    \VPP = & \{\fr\}\\
    \VN = & \{\fp,\fq,\fr,\fs\}\\
  \end{array}$
  \hfill
  $\begin{array}[t]{ll}
    R = & \fs \revimp \pnot \fr.
  \end{array}$

  \noindent
  Here $\fr$ is allowed in $R$ in negative bodies and but not in positive
  heads.

\end{examp}

\section{Prototypical Implementation}
\label{sec-implementation}

We implemented the synthesis according to \cref{thm-def} and
\cref{coro-pos} prototypically with the \PIE (\name{Proving,
  Interpolating, Eliminating}) environment~\cite{cw:pie:2016,cw:pie:2020},
which is embedded in SWI-Prolog~\cite{swiprolog}. The implementation and all
requirements are free software, see
\url{http://cs.christophwernhard.com/pie/asp}.

For Craig-Lyndon interpolation there are several options, yielding different
solutions for some problems. Proving can be performed by \CMProver, a clausal
tableaux/connection prover connection~%
\cite{letz:habil,bibel:atp,bibel:otten:2020} included in \PIE, similar to
\PTTP~\cite{pttp}, \SETHEO~\cite{setheo:92} and \leanCoP~\cite{leancop}, or by
\ProverN~\cite{prover9}. Interpolant extraction is performed on clausal
tableaux following~\cite{cw:interpolation:2021}. Resolution proofs by \ProverN
are first translated to tableaux with the \name{hyper} property, which
allows to pass range-restriction and the Horn property from inputs to outputs
of interpolation~\cite{cw:range:2023}. Optionally also proofs by \CMProver can
be transformed before interpolant extraction to ensure the hyper property.
With \CMProver it is possible to enumerate alternate interpolants extracted
from alternate proofs. More powerful provers such as \EProver~\cite{eprover}
and \Vampire~\cite{vampire} unfortunately do not emit gap-free proofs that
would be suited for extracting interpolants.

The organization of the implementation closely follows the abstract exposition
in \cref{sec-craig-beth}, with Prolog predicates corresponding to theorems.
For convenience in some applications, the predicate that realizes
\cref{thm-def} and \cref{coro-pos} permits to specify the vocabulary also
complementary, by listing predicates not allowed in the result.
In general, if outputs are largely \emph{semantically} characterized,
\emph{simplifications} play a key role. Solutions with redundancies should be
avoided, even if they are correct. This concerns all stages of our workflow:
preparation of the interpolation inputs, choice or transformation of the proof
used for interpolant extraction, interpolant extraction, the interpolant
itself, and the first-order representation of the output program, where strong
equivalence must be preserved, possibly modulo a background program. Although
our system offers various simplifications at these stages, this seems an area
for improvement with large impact for practice. Some particular issues only
show up with experiments. For example, for both \CMProver and \ProverN a
preprocessing of the right sides of the interpolation entailments to reduce
the number of distinct variables that are Skolemized by the systems was
necessary, even for relatively small inputs.

The application of first-order provers to interpolation for reformulation
tasks is a rather unknown territory. Experiments with limited success are
described in~\cite{benedikt:2017}. Our prototypical implementation covers the
full range from the application task, synthesis of an answer set program for
two given programs and a given vocabulary, to the actual construction of a
result program via Craig-Lyndon interpolation by a first-order prover. At
least for small inputs such as the examples in the paper it successfully
produces results. We expect that with larger inputs from applications it at
least helps to identify and narrow down the actual issues that arise for
practical interpolation with current first-order proving technology. This is
facilitated by the embedding into \PIE, which allows easy interfacing to
community standards, e.g., by exporting proving problems underlying
interpolation in the \name{TPTP}~\cite{tptp} format.

\section{Conclusion}
\label{sec-conclusion}

We presented an effective
variation of projective Beth definability based on Craig
interpolation for answer set programs with respect to strong equivalence under
the stable model semantics.
Interpolation theorems for logic programs under stable models semantics were
shown before in~\cite{amir:2002}, where, however, programs are only on the
left side of the entailment underlying interpolation, and the right side as
well as the interpolant are just first-order formulas. Craig interpolation and
Beth definability for answer set programs was considered in
\cite{gpv:interpolable:2011,pearce:valverde:synonymous:2012}, but with just
existence results for equilibrium logic, which transfer to answer set
semantics. The transfer of Craig interpolation and Beth definability from
monotonic logics to
default logics is investigated
in~\cite{cassano:etal:2019}, however, applicability to the stable model
semantics and a relationship to strong equivalence are not discussed.

In~\cite{toman:weddell:horn:2022} ontology-mediated querying over a knowledge
base for specific description logics is considered, based on Beth definability
and Craig interpolation. Interpolation is applied there to the Clark's
completion~\cite{clark:1978} of a Datalog program. Although completion
semantics is a precursor of the stable model semantics, both agreeing on a
subclass of programs, completion seems applied in~%
\cite{toman:weddell:horn:2022} actually on program fragments, or on the
``schema level'', as in our \cref{examp-thm-def-map,examp-thm-def-map-fold,examp-thm-def-chain}.
A systematic
investigation of these forms of completion is on our agenda.
\name{Forgetting}~\cite{delgrande:17,GoncalvesEtAl2023} or, closely related,
\name{uniform interpolation} and \name{second-order quantifier elimination}~%
\cite{soqe:book}, may be seen as generalizing Craig interpolation: an
expression is sought that captures exactly what a given expression says about
a restricted vocabulary. A \emph{Craig} interpolant is moreover related to
a second given expression entailed by the first, allowing to extract it from a
proof of the entailment.

We plan to extend our approach to classes of programs with practically
important language extensions. Arithmetic expressions and comparisons in rule
bodies are permitted in the language mini-\textsc{gringo}, used already in
recent works on verifying strong equivalence~%
\cite{flls:verifying:tight:2020,Lifschitz2022,FandinnoEtAl2023}. We considered
strong equivalence relative to a context \emph{program}~$P$. These contexts
might be generalized to first-order theories that capture
theory extensions of logic programs~%
\cite{GebserEtAl2016a,JanhunenEtAl2017,CabalarEtAl2019b}.

So far, our approach characterizes result programs syntactically by
restricting allowed predicates and, to some degree, also their positions in
rules. Can this be generalized? Restricting allowed functions, including
constants, seems not possible: If a function occurs only in the left side of
the entailment underlying Craig interpolation, the interpolant may have
existentially quantified variables, making the conversion to a logic program
impossible. From the interpolation side it is known that the Horn property and
variations of range-restriction can be preserved~\cite{cw:range:2023}. It
remains to be investigated, how this transfers to synthesizing logic programs,
where in particular restrictions of the rule form and \name{safety}~%
\cite{LeeEtAl2008,CabalarEtAl2009a}, an important property related to range
restriction, are of interest.

In addition to verifying strong equivalence, recent work addresses verifying
further properties, e.g., \emph{uniform} equivalence (equivalence under inputs
expressed as ground facts)~%
\cite{LifschitzEtAl2018,flls:verifying:tight:2020,FandinnoEtAl2023}. The
approach is to use completion~\cite{clark:1978,lifschitz:book} to express the
verification problem in classical first-order logic. It is restricted to
so-called \emph{locally tight} logic programs~\cite{FandinnoEtAl2024}. Also
forms of equivalence that abstract from ``hidden'' predicates are mostly
considered for such restricted program classes, as relative
equivalence~\cite{lin:equivalence:2002}, projected answer
sets~\cite{eiter:solution:correspondences:2005}, or external
behavior~\cite{FandinnoEtAl2023}. It remains future work to consider
definability with uniform equivalence and hidden predicates, possibly using
completion for translating logic programs to formulas (instead of $\pfol$),
although it applies only to restricted classes of programs.

Independently from the application to program synthesis, our characterization
of \name{encodes a program} and our procedure to extract a program from a
formula suggest a novel practical method for transforming logic programs while
preserving strong equivalence. The idea is as follows, where $P$ is the given
program: \emph{First-order transformations} are applied to $F \eqdef \pfol(P)$
to obtain a first-order formula~$F'$ such that $\SP_{F} \land F' \equiv \SP_{F} \land
F$. For transformations that result in a universal formula, $F'$
encodes a logic program, as argued in \cref{remark-simp}. Applying the
extraction procedure to $F'$ then results in a program $P'$ that is strongly
equivalent to~$P$. This makes the wide range of known first-order
simplifications and formula transformations applicable and provides a firm
foundation for soundness of special transformations. We expect that this
approach supplements known dedicated simplifications that preserve strong
equivalence~\cite{eiter:solution:correspondences:2005,BrassDix1999}.

With its background in artificial intelligence research, answer set
programming is a declarative approach to problem solving, where specifications
are processed by automated systems. It is suitable for meta-level reasoning to
verify properties of specifications and to synthesize new specifications. On
the basis of a technique to verify an equivalence property of answer set
programs we developed a synthesis technique. Our tools were Craig
interpolation and Beth definability, fundamental insights about first-order
logic that relate given formulas to further formulas characterized in certain
ways. Practically realized with automated first-order provers, Craig
interpolation and Beth definability become tools to synthesize formulas, and,
as shown here, also answer set programs.
\subsubsection{Acknowledgments.}
 The authors thank anonymous reviewers for helpful suggestions to improve the
 presentation.

\bibliographystyle{splncs04}
\bibliography{bibexport} 

\closeout\plabelsfile
\closeout\plabelslogfile

\end{document}